\documentclass[12pt,draftclsnofoot,onecolumn]{IEEEtran} 
\usepackage{amsmath,amsfonts}
\usepackage{algorithmic}
\usepackage{algorithm}
\usepackage{array}
\usepackage[caption=false,font=normalsize,labelfont=sf,textfont=sf]{subfig}
\usepackage{textcomp}
\usepackage{mathtools}
\usepackage{xcolor}
\usepackage[switch]{lineno}
\usepackage{stfloats}
\usepackage{url}
\usepackage{verbatim}
\usepackage{graphicx}
\setkeys{Gin}{draft=false}

\usepackage{cite}
\usepackage{comment}
\usepackage{amsthm}  
\DeclareMathOperator{\Var}{Var}

\newtheorem{theorem}{Theorem}[section]
\newtheorem{lemma}{Lemma}

\begin{document}
\title{Rate--Distortion Bounds for Heterogeneous Random Fields on Finite Lattices}

\author{
Sujata Sinha,
Vishwas Rao,
Robert Underwood,
David Lenz,
Sheng Di,
Franck Cappello,
and Lingjia Liu\thanks{Corresponding author: Lingjia Liu (email: ljliu@vt.edu).}
\thanks{
Sujata Sinha and Lingjia Liu are with Wireless@VT, and the Bradley Department of Electrical and Computer Engineering, Virginia Tech, Alexandria, VA 22305, USA.}
%
\thanks{
Vishwas Rao, Robert Underwood, David Lenz, Sheng Di, and Franck Cappello are with Argonne National Laboratory, Lemont, IL 60439, USA.}
}



\maketitle

\begin{abstract}
Since Shannon’s foundational work, rate–distortion theory has defined the fundamental limits of lossy compression. Classical results, derived for memoryless and stationary ergodic sources in the asymptotic regime, have shaped both transform and predictive coding architectures, as well as practical standards such as JPEG. Finite-blocklength refinements, initiated by the non--asymptotic achievability and converse bounds of Kostina and Verdú, provide precise characterizations under excess-distortion probability constraints, but primarily for memoryless or statistically homogeneous models. In contrast, error-bounded practical lossy compressors for scientific computing, such as SZ, ZFP, MGARD, and SPERR, are designed for finite, high-dimensional, spatially correlated, and statistically heterogeneous random fields. These compressors partition data into fixed-size tiles that are processed independently, making tile size a central architectural constraint. Structural heterogeneity, finite lattice effects, and tiling constraints are not addressed by existing finite-blocklength analyses. This paper introduces a finite-blocklength rate–distortion framework for heterogeneous random fields on finite lattices, explicitly accounting for the tile-based architectures used in high-performance scientific compressors. The field is modeled as piecewise homogeneous with regionwise stationary second-order statistics, and tiling constraints are incorporated directly into the source model. Under an excess-distortion probability criterion, we establish non-asymptotic achievability, converse bounds and derive a second-order expansion that quantifies the impact of spatial correlation, region geometry, heterogeneity, and tile size on the rate and dispersion.
In this way, we are able to bridge the gap between the theory and the practice while providing design guidance on practical lossy compressors for scientific data.

\end{abstract}

\begin{IEEEkeywords}
Error-bounded lossy compression, finite-blocklength regime, heterogeneous random fields, rate–distortion theory, scientific datasets.
\end{IEEEkeywords}

\section{Introduction}
\IEEEPARstart{U}{nderstanding} the fundamental limits of data reduction is increasingly critical as scientific computing continues to scale in complexity and data volume. Large-scale simulations and advanced experimental instruments routinely generate high-dimensional floating-point fields defined on finite lattices. The resulting data volumes exceed available storage capacity and communication bandwidth, rendering data reduction indispensable for computational feasibility and scientific discovery~\cite{compression_survey,lcls-ii,APSU,cappello2019use}. 

Error-bounded lossy compression has emerged as a practical and widely deployed solution to this challenge~\cite{fullstate_compression_quantum}. In contrast to lossless schemes~\cite{fpc,zlib,zstd}, which typically achieve limited reduction for floating-point scientific data~\cite{szx,lossless_ratio}, modern error-bounded compressors~\cite{zfp,sz16,sz17,Xin-bigdata18,sperr,liang2021mgard+} operate at substantially lower rates while guaranteeing that the reconstruction error satisfies a prescribed distortion tolerance. These compressors are now integrated into scientific I/O pipelines and are routinely used in production environments. 

Despite their operational success, a rigorous information-theoretic characterization of fundamental limits for such compressors remains absent. In particular, it is not known how closely the rates achieved by contemporary error-bounded methods approach the fundamental RD limit for the data they compress. In the absence of such limits, compressor development and parameter selection are guided largely by empirical heuristics rather than principled optimality criteria. Consequently, the gap between practical performance and theoretical limits remains unquantified. Compression research therefore relies primarily on empirical comparisons, without a means to assess how effectively a given design exploits the intrinsic compressibility of a scientific fields~\cite{zfp, liang_sz3_2023, 4015488, sperr, liang2021mgard+}. This limitation encourages incremental improvements and constrains systematic progress. 

Rate--distortion (RD) theory provides the principled framework for characterizing the minimum rate required to represent a source within a prescribed distortion constraint. The foundational works of Shannon~\cite{shannon_mathematical_1948,claude_e_shannon_coding_1959} and Berger~\cite{berger1971ratedistortion} establish the asymptotic RD function and connect achievable compression efficiency to the statistical structure of the source. Classical formulations assume either memoryless sources or stationary ergodic processes and operate in the asymptotic blocklength regime~\cite{elements_of_it}. These assumptions yield tractable characterizations, including reverse water-filling solutions for Gaussian sources, but restrict applicability to globally homogeneous models.

Subsequent developments extend RD theory to finite blocklength. Ingber and Kochman~\cite{ingber2011} derive second-order corrections via distortion dispersion for Gaussian sources, while Kostina and Verdú~\cite{kostina2012finiteblocklength} establish non-asymptotic achievability and converse bounds under excess-distortion probability constraints. These results provide sharp characterizations of finite-blocklength performance but continue to rely on statistically homogeneous source models in which all samples share identical distributional structure.

Scientific datasets depart fundamentally from these assumptions. They are defined on finite index lattices, exhibit substantial spatial correlation, and are statistically heterogeneous across the domain. Distinct physical regimes may coexist within a single field, producing variations in local mean and covariance structure across spatial regions. Such heterogeneity breaks global translational invariance and precludes the existence of a single autocovariance function. Consequently, neither classical asymptotic RD formulations nor existing finite-blocklength results directly characterize the compression limits of these sources.

Modern scientific compressors address heterogeneity operationally through tile-based architectures~\cite{zfp,sz16,sz17,Xin-bigdata18,sperr}. The data domain is partitioned into finite tiles. This strategy enables adaptation to local statistics and supports parallel execution. 
The existing RD theory and their assumptions do not explicitly incorporate such structural constraints. Consequently, classical RD predictions fail to reproduce the RD behavior observed for tile-based compression of heterogeneous scientific fields. This discrepancy does not reflect a deficiency of RD theory itself, but rather a mismatch between its modeling assumptions and the finite, spatially correlated, and heterogeneous nature of real scientific data.

The objective of this work is to initiate a systematic extension of finite-blocklength RD theory to heterogeneous random fields on finite lattices with explicit architectural constraints. We model the source as a piecewise homogeneous random field defined on a finite discrete lattice, partitioned into disjoint regions, each locally stationary with region-specific second-order statistics. Tiling constraints are incorporated directly into the source model, thereby aligning the information-theoretic formulation with the operational structure of practical compressors. The contributions of this paper are:
\begin{enumerate}
\item \textbf{Piecewise homogeneous source model for heterogeneous random fields}. 
We develop a piecewise homogeneous source model for heterogeneous random fields defined on finite lattices. Under this structured model, we formulate the fixed-length lossy source coding problem with quadratic distortion and excess-distortion probability constraints, and characterize \(
M^*(\mathcal{S},D,\varepsilon)
\),
the minimum total number of regionwise codewords required so that the excess–distortion probability does not exceed $\varepsilon$ under the piecewise homogeneous approximation. The formulation explicitly incorporates spatial heterogeneity, tile-based modeling, and region-structured encoding consistent with tile-based compression architectures.
\item \textbf{Non-asymptotic achievability and converse bounds.}  
We establish finite-blocklength upper and lower bounds on the excess-distortion probability for region-structured codes. The achievability bound (Theorem~\ref{thm:piecewise_achievability_final}) is derived via regionwise random coding and product distortion balls under block-diagonal Gaussian structure. The converse (Theorem~\ref{thm:piecewise_converse_final}) is expressed in terms of the global distortion--tilted information density and applies to any regionwise code. Together, these results characterize the fundamental performance limits at finite lattice size under explicit architectural constraints.

\item \textbf{Second--order asymptotics with dispersion decomposition.}  
In the asymptotic regime where region sizes scale proportionally, we prove a normal approximation for $\log M^*(\mathcal{S},D,\varepsilon)$, the minimum total number of codewords (across all regions) required so that the excess–distortion probability does not exceed $\varepsilon$ as,
\[
\log M^*(\mathcal{S},D,\varepsilon)
=
n R_{\mathrm{pw}}(D)
+
\sqrt{V_{\mathrm{pw}}(D)}\, Q^{-1}(\varepsilon)
+
O(\log n),
\]
where the first--order term is induced by an optimal regionwise distortion allocation and the dispersion term decomposes additively across regions.

\item \textbf{Closed-form spectral characterization and reverse water-filling across regions.}  
We show that the global RD function reduces to a convex regionwise distortion allocation problem whose solution equalizes a common water level across regions. Under quadratic Gaussian structure, we derive explicit reverse water-filling expressions for region rates and obtain a closed-form dispersion formula
\[
V_{\text{pw}}(D)
=
\frac12
\sum_{r\in\mathcal R}\sum_{i=1}^{n_r}
\mathbf 1\{\lambda_{r,i}>\theta^\star\},
\]
revealing that spatial heterogeneity influences second-order performance solely through the number of active eigenmodes exceeding the global water level. This provides an explicit spectral interpretation of both first- and second-order limits.

\item \textbf{Connection to scientific compressors}. We translate the developed finite--blocklength RD bounds to quantify the gap between fundamental limits of compressibility and state-of-the-art lossy compressors. 
The resulting characterization quantifies the practical optimality gap under realistic distortion and reliability constraints, and provides principled guidance for the design of next--generation algorithms tailored to multidimensional scientific workloads (Fig..~\ref{fig:four_in_row}(c) -- (e)).
\end{enumerate}
The remainder of this paper is organized as follows.
Section~\ref{sec:preliminaries} establishes the notation, lattice structure, statistical definitions of homogeneous and heterogeneous random fields and background on error-bounded lossy compressors. Section~\ref{sec:piecewise_homogeneous_rf} develops the piecewise homogeneous random field model that represents global heterogeneity through regionwise homogeneity. Section~\ref{sec:problem_formulation} formulates the finite-lattice RD problem under this model, defines the quadratic distortion measure, and introduces the regionwise source coding framework and excess-distortion criterion. Section~\ref{sec:non_asymtotic_bounds} derives non-asymptotic achievability and converse bounds for regionwise codes under this model.
Section~\ref{sec:second_order_asymptotics} establishes the second-order asymptotic expansion of the minimal achievable rate and characterizes the associated dispersion terms induced by spatial correlation and regionwise variability.
Section~\ref{sec:region_allocation_waterfilling} characterizes the global RD function via optimal regionwise distortion allocation and derives the corresponding water-filling solution.
Section~\ref{sec:dispersion_piecewise_spectral} develops a spectral characterization of the dispersion terms for the piecewise stationary setting.
Sections~\ref{sec:Incorporating Tiling Constraints} and~\ref{sec:Statistical Validation of Modeling Assumptions} align the source model with compressor architecture and statistically validate the piecewise homogeneous assumption against real scientific fields, respectively. Section~\ref{sec:implications} examines the theoretical and operational implications, including connections to tile-based scientific compression. Section~\ref{section:Conclusion} concludes the paper, and Section~\ref{section:Limitations and Future Work} discusses the limitations and future directions of this work.

\section{Mathematical Preliminaries and Background}\label{sec:preliminaries}
Let $(\Omega, \mathcal{F}, \mathbb{P})$ be a complete probability space, and let $\mathcal{S}\subset\mathbb{Z}^d$ denote finite discrete index lattice of finite dimension $d \geq 1$. A real-valued random field on $\mathcal{S}$ is a measurable mapping,
\begin{equation}
    X: \Omega \rightarrow\mathbb{R}^{\mathcal{S}},\quad X(s, \omega) \in \mathbb{R}, s\in\mathcal{S}, \omega\in\Omega,
    \label{eq: random_field1}
\end{equation} 
such that for each lattice index $s \in \mathcal{S}$, the coordinate function $X(s)$ is a real random variable on $(\Omega, \mathcal{F}, \mathbb{P})$. For a fixed outcome $\omega \in \Omega$, the mapping $s \mapsto X(s, \omega)$ defines a deterministic realization of the field on $\mathcal{S}$. Thus, the random field is the indexed family $X = \{X(s):s\in \mathcal{S}\}$. 

When $\mathcal{S}\subset \mathbb{Z}^d$, the field is defined on a discrete lattice, whereas if $\mathcal{S}\subset \mathbb{R}^d$, it is defined on a continuous domain. In this work, we consider discrete lattice fields of spatial dimension $d=\{2, 3\}$. Assume $X(s) \in L^2(\Omega)$ for all $s \in \mathcal{S}$, with $L^2(\Omega)$ denoting the Hilbert space of square-integrable random variables on $(\Omega, \mathcal{F}, \mathbb{P})$. Consequently, each coordinate variable $X(s)$ admits finite second-order moments. The mean function and covariance kernel are defined as
\begin{subequations}
\label{eq:any_random_field_definition}
\begin{align}
    m(s) &= \mathbb{E}[X(s)], \label{eq:any_random_field_mean}\\
    C(s_1, s_2) &= \mathbb{E}\bigl\{[X(s_1)-m(s_1)][X(s_2)-m(s_2)]\bigr\} \label{eq:any_random_field_covariance}
\end{align}
\end{subequations}
For any finite subset $\{s_i\}_{i=1}^{n}\subset\mathcal{S}$ and coefficients $\{a_i\}_{i=1}^{n} \subset \mathbb{R}$, the quadratic form
\begin{equation}
\sum_{i=1}^{n}\sum_{j=1}^{n}a_ia_jC(s_i, s_j) \geq 0
\label{eq:cov_coeff}
\end{equation}
holds. The function $C:\mathcal{S}\times\mathcal{S}\rightarrow\mathbb{R}$ is a valid covariance kernel of a second order real random field iff it is symmetric and positive semidefinite and the condition~\eqref{eq:cov_coeff} holds for all finite subsets and coefficients.

\subsection{Homogeneous Random Field} \label{sec:homogeneous_rf}
A random field $X = \{X(s):s\in \mathcal{S}\}$ is homogeneous on $\mathcal{S}$ if $\mathcal{S}$ is equipped with a transitive group $G$ of transformations $g:\mathcal{S} \rightarrow \mathcal{S}$ and the statistical characteristics of $X$ are invariant under the action of $G$. Homogeneity in the strict sense holds if, for every $n \in \mathbb{N}$, every finite collection of points $s_1, \ldots, s_n \in \mathcal{S}$, and every group element $g \in G$,
\begin{equation}
    (X(s_1), \ldots , X(s_n)) \stackrel{d}{=} (X(gs_1), \ldots , X(gs_n)). 
    \label{eq:homogeneous}
\end{equation}
Assume $\mathbb{E}[\lvert X(s)\rvert^2]<\infty$ for all $s \in \mathcal{S}$. The field is homogeneous in the wide sense (second order homogeneous) if its first and second moments are invariant under $G$. That is,
\begin{subequations}
\label{eq:homogeneity_wide_sense}
\begin{align}
    m(gs) &= \mathbb{E}[X(gs)] = m(s), \qquad \forall s\in\mathcal{S}, \forall g \in G, \label{eq:homogeneity_wide_sense_mean}\\
    C(gs_1,gs_2) &= C(s_1, s_2),\qquad \forall s_1,s_2\in\mathcal{S},\ \forall g\in G. \label{eq:homogeneity_wide_sense_covarinace} 
\end{align}
\end{subequations}
A key special case is the full lattice $\mathcal{S} = \mathbb{Z}^d$, where $G$ is the translation group $g_h(s) = s+h$. Wide-sense homogeneity reduces to second-order stationarity under translations, and \eqref{eq:homogeneity_wide_sense} becomes\begin{subequations}
\label{eq:stationarity_conditions}
\begin{align}
    m(s) &= m, \qquad \forall\, s \in \mathcal{S}, \label{eq:stationarity_mean}\\
    C(s_1, s_2) &= \Gamma(h), \qquad h = s_1 - s_2,\; \Gamma : \mathbb{Z}^d \to \mathbb{R}. \label{eq:stationarity_covariance}
\end{align}
\end{subequations}
By the Bochner–Herglotz theorem for $\mathbb{Z}^d$, autocovariance $\Gamma$ admits the spectral representation
\begin{equation}
\Gamma(h)=\int_{[-\pi,\pi]^d} e^{i\langle h,\lambda\rangle} \nu(d\lambda),
\label{eq:bochner_representation}
\end{equation}
where $\nu$ is a finite non-negative measure on the $d$-torus. If $\nu$ is absolute continuous with respect to Lebesgue measure, then $d\nu(\lambda) = S_X(\lambda)d\lambda$ with power spectral density $S_X(\lambda)$. 

\subsection{Heterogeneous Random Fields}\label{sec:heterogeneous_rf}
The random field $X$ is heterogeneous if there exists no transitive group $G$ acting on $\mathcal{S}$ for which the strict invariance \eqref{eq:homogeneous} holds for all $g\in G$. Wide-sense heterogeneity holds when the first- and second-order moments are not $  G  $-invariant; equivalently, at least one of the equalities in \eqref{eq:homogeneity_wide_sense} is violated. For translations on $\mathcal{S}=\mathbb{Z}^d$ with $g_h(s)=s+h$, heterogeneity occurs when either the mean,
\begin{equation}
m(s)=\mathbb{E}[X(s)] \ \text{depends on $s$},
\label{eq:hetero_mean_general}
\end{equation}
or the covariance function cannot be expressed solely as a function of the lag,
\begin{equation}
C(s_1,s_2)\ \text{cannot be written as } \Gamma(h).
\label{eq:hetero_cov_general}
\end{equation}
The second-order structure of the field is fully described by the unrestricted pair $(m(\cdot), C(\cdot,\cdot))$, without reduction to a single lag-covariance function or spectral representation as in \eqref{eq:stationarity_conditions}.

The general formulation in~\eqref{eq:hetero_mean_general}–\eqref{eq:hetero_cov_general} accommodates arbitrary second-order random fields on discrete lattices. However, it lacks structural constraints that severely limit tractability for key analytical tasks, including statistical inference, likelihood computation, and RD analysis. In contrast to homogeneous fields, where the covariance kernel admits a single globally consistent spectral characterization, heterogeneous fields generally permit no such simplification. Consequently, deriving closed-form expressions, efficient estimators, or information-theoretic bounds requires imposing additional assumptions on the spatial variation of $m(\cdot)$ and $C(\cdot,\cdot)$.

One way to restore structure is to assume smoothly varying second-order statistics, yielding locally defined covariance surrogates. While suitable for gradual spatial variation, such models are computationally prohibitive on large discrete lattices and poorly suited to global information-theoretic analysis. Consequently, we approximate heterogeneous random fields by restricting non-stationarity to a finite number of regions, each modeled as a homogeneous region. This piecewise homogeneous approximation trades smooth transitions for significant analytical and computational tractability while preserving large-scale spatial heterogeneity, and motivates the modeling framework introduced in Section~\ref{sec:piecewise_homogeneous_rf}.
\subsection{Modern Compressor Design}
Modern error–bounded scientific compressors are designed for deployment in large–scale high–performance computing (HPC) workflows. In this setting, compression must operate under strict constraints on memory footprint, parallel scalability, cache locality, and I/O throughput. These architectural constraints directly influence how multidimensional data fields are accessed, partitioned, and encoded. 

A common structural feature across state–of–the–art scientific compressors is tile–based processing. Large multidimensional arrays are partitioned into fixed–size tiles, and compression operations are performed locally on each tile. This design enables distributed execution, bounded working memory, and compatibility with hierarchical storage systems.

Tilewise processing has direct RD implications. Since encoding decisions are made using only the samples contained within a tile, the compressor can exploit only the spatial correlation accessible at that granularity. Larger tiles expose longer–range correlation structure and therefore tend to improve empirical compression efficiency. However, they increase memory usage and may reduce scalability. Smaller tiles enhance parallel efficiency and locality but restrict the exploitable correlation structure. Consequently, tile size is an architectural parameter that directly affects achievable rate–distortion performance in practice.

We briefly summarize three representative compressors (included in this work).

\paragraph{SZ ( (https://szcompressor.org) )~\cite{liang_sz3_2023}}
SZ is a prediction–based, error–bounded compressor widely used in scientific workflows. It employs local predictors such as Lorenzo, regression models, and spline interpolation to estimate each sample from its neighbors. The resulting prediction residuals are quantized under a user–defined error tolerance and subsequently entropy coded. The framework is modular, allowing customization of pre-processing, prediction, quantization, and backend lossless encoding stages. In practical deployments, SZ operates over finite tiles to ensure scalability. In this work, we evaluate SZ3, the third generation of the SZ family. SZ is included in this work as a representative state-of-the-art prediction-based compressor that employs moderate tile granularity to exploit spatial structure in scientific datasets.

\paragraph{ZFP  (https://zfp.io)~\cite{zfp}} ZFP is a transform–based compressor designed for multidimensional floating–point arrays. Similar to SZ family of compressors, it partitions the data into small fixed–size tiles. ZFP supports multiple operating modes, including fixed–accuracy and fixed–precision modes, and enables fine–grained random access to compressed arrays. Error bounds in ZFP can be enforced not only for a single compression–decompression cycle~\cite{Diffenderfer2018ErrorAO}, but also when ZFP is embedded within iterative numerical algorithms~\cite{osti_1806436}, where compression errors may accumulate across iterations. ZFP is included as a representative state-of-the-art transform–based architecture with small tile granularity, thereby emphasizing strong locality constraints and offering a contrasting design point to prediction–based methods.
\paragraph{SPERR (github.com/NCAR/SPERR)~\cite{sperr}} SPERR is a wavelet–based compressor tailored to 2D and 3D scientific data compression. SPERR supports progressive decoding modes, including flexible–rate and multi–resolution reconstruction. In practice, SPERR also operates on finite tiles to maintain scalability and memory control. Although its transform may span larger neighborhoods than small fixed tiles, its implementation still adheres to finite, localized processing dictated by HPC constraints. SPERR is included as a representative state-of-the-art wavelet–based architecture with configurable tile granularity, enabling evaluation of compression performance across a broader range of locality scales.

SZ, ZFP, and SPERR span distinct architectural principles and tile sizes, providing a representative basis for evaluating the effect of tile size on compression performance. In Section~\ref{subsec:implications}, we quantify how closely SZ3, ZFP, and SPERR approach the fundamental limits of compressibility under prescribed distortion and reliability constraints.
\section{Piecewise Homogeneous Random Fields}\label{sec:piecewise_homogeneous_rf}
We approximate heterogeneous random fields by restricting non-stationarity to a finite number of regions, each homogeneous in the wide sense. Let $\{\mathcal{S}_r\}_{r=1}^K$ be a measurable partition of $\mathcal{S}:$ 
\begin{equation}
\mathcal{S}_r \cap \mathcal{S}_{r'} = \emptyset \quad \text{for } r \neq r', 
\qquad 
\bigcup_{r=1}^K \mathcal{S}_r = \mathcal{S}.
\label{eq:partition}
\end{equation}
where the associated region label map is
\begin{equation}
R(s) = \sum_{r=1}^K r\,\mathbf{1}_{\{s \in \mathcal{S}_r\}}.
\label{eq:associated_region_map}
\end{equation}
The random field $X$ is piecewise homogeneous in the wide sense with respect to the partition~\eqref{eq:partition} if, for each region $r$, the restricted field $X_r=\{X(s):s\in\mathcal{S}_r\}$ is second-order stationary under translations within $\mathcal{S}_r$. 
That is, there exist $m_r\in\mathbb{R}$ and $\Gamma_r:\mathbb{Z}^d\to\mathbb{R}$ such that
\begin{subequations}
\label{eq:piecewise_hom_rf}
\begin{align}
    m(s) &= \mathbb{E}[X(s)] = m_r, \qquad s\in\mathcal{S}_r, \label{eq:piecewise_hom_rf_mean}\\
    C(s_1,s_2) &= \Gamma_r(s_1-s_2), \; \qquad s_1, s_2, s_1+h, s_2+h \in \mathcal{S}_r. \label{eq:piecewise_hom_rf_mean_covariance}
\end{align}
\end{subequations}
When $K=1$, this definition reduces to the globally wide-sense homogeneous model in \eqref{eq:stationarity_conditions} with $(m_1,\Gamma_1)=(m,\Gamma)$. Piecewise homogeneity may therefore be viewed as a finite-dimensional relaxation of global stationarity, in which translation invariance is enforced only within regions. Under this approximation, cross-region second-order dependence is not modeled. Specifically, for $s_1 \in \mathcal{S}_r$ and $s_2 \in \mathcal{S}_{r'}$ with $r \neq r'$, we assume $C(s_1, s_2)=0$. Modeling structured cross-region dependence, including boundary interactions, is an important direction but is left for future work.

For each region $r$, the function $\Gamma_r$ is assumed to be positive semidefinite, so that it defines a valid covariance kernel on $\mathbb{Z}^d$. Consequently, by the Bochner–Herglotz theorem, $\Gamma_r$ admits 
\begin{equation}
\Gamma_r(h) = \int_{[-\pi,\pi]^d} e^{i\langle h,\lambda\rangle}\, \nu_r(d\lambda),
\end{equation}
where $\nu_r$ is the region-specific finite non-negative measure. If $\nu_r$ is absolutely continuous with respect to Lebesgue measure, then $d\nu_r(\lambda)=S_r(\lambda)\,d\lambda$, where $S_r$ denotes the region-specific power spectral density. This structured model replaces the unconstrained heterogeneous second-order structure 
with a finite collection of region-wise stationary parameters $\bigl\{ (m_r, \Gamma_r)\bigr\}_{r=1}^{K}$.

In the remainder of this work, Gaussianity is imposed under the piecewise homogeneous approximation. Specifically, for each region $r$, the restricted field $X_r$ is modeled as a stationary Gaussian random field (GRF) on the $d-$ dimensional lattice. Moreover, we assume that $X$ is jointly Gaussian on $\mathcal{S}$ with block-diagonal covariance with respect to the partition $\mathcal{S}_r$, so that the region-restricted fields $\{X_r\}$ are mutually independent. When $d=2$, we treat $X_r$ as a 2D GRF with mean $m_r$ and covariance function $\Gamma_r$. For $d=3$, $X_r$ is treated as a 3D GRF with the same parameters. This assumption yields tractable finite-dimensional multivariate Gaussian laws on each $\mathcal{S}_r$, which are well-suited to RD analysis and the derivation of non-asymptotic bounds developed in subsequent sections.


\section{Problem Formulation}\label{sec:problem_formulation}
We now formulate the RD problem for heterogeneous random fields under the piecewise homogeneous Gaussian random field approximation introduced in Section \ref{sec:piecewise_homogeneous_rf}. Let the per-site source and reproduction alphabets be $\mathcal{X} = \hat{\mathcal{X}} = \mathbb{R}$. For the finite discrete lattice $\mathcal{S}\subset\mathbb{Z}^{d}$, where $d \in \{2, 3\}$, the field realization lies in $\mathcal{X}^{\mathcal{S}} = \mathbb{R}^{\mathcal{S}}$. 

For each $t\in\{1,\ldots,T\}$, let $X_t: \Omega \rightarrow\mathbb{R}^{\mathcal{S}}$ denote a heterogeneous random field on $\mathcal{S}$ in the sense of Section~\ref{sec:heterogeneous_rf}. For a fixed outcome $\omega \in \Omega$, we denote $x_t \coloneqq X_t(\omega)\in\mathbb{R}^{\mathcal{S}}$ as the corresponding realization of the random field. Following the modeling framework developed in Section~\ref{sec:piecewise_homogeneous_rf}, we approximate this heterogeneity by restricting wide-sense homogeneity to a finite number of regions. Specifically, for each $t\in\{1,\ldots,T\}$, we assume $\mathcal{S}$ admits a measurable partition $\{\mathcal{S}_{t,r}\}_{r\in\mathcal{R}_t}$, where $\mathcal{R}_t\subseteq\{1,\ldots,K\}$, such that
\begin{equation}
    \mathcal{S}_{t,r}\cap\mathcal{S}_{t,r'}=\emptyset \quad \text{ for }r\neq r', \quad \bigcup_{r\in\mathcal{R}_t}\mathcal{S}_{t,r}=\mathcal{S}.
\end{equation}
The associated region label map is
\begin{equation}
    R_{t}(s) = \sum_{r=1}^K r\,\mathbf{1}_{\{s \in \mathcal{S}_{t,r}\}},
\end{equation}
analogous to the construction in~\eqref{eq:partition}-\eqref{eq:associated_region_map} and with the convention that $\mathcal{S}_{t,r}=\emptyset$ for $r\notin\mathcal{R}_t$. For each active region $r\in\mathcal{R}_t$, we define the region-restricted field 
\begin{equation}
    X_{t,r} \coloneqq \{X_t(s):s\in\mathcal{S}_{t,r}\}.
    \label{eq:tr_field}
\end{equation} 
Under the piecewise homogeneous approximation, we model $X_{t,r}$ as a second-order stationary Gaussian random field on the $d$-dimensional lattice. $X_{t,r}$ has constant mean $m_r \in \mathbb{R}$ and covariance function $\Gamma_r:\mathbb{Z}^d \rightarrow \mathbb{R}$, such that its mean and covariance satisfy \eqref{eq:piecewise_hom_rf}(a)–(b). The collection of region-specific parameters $\bigl\{(m_r, \Gamma_r)\bigr\}_{r=1}^{K}$ is shared across all realizations, while the spatial supports $\{\mathcal{S}_{t,r}\}_{r\in\mathcal{R}_t}$ may vary with $t$. Consistent with the piecewise homogeneous approximation of Section \ref{sec:piecewise_homogeneous_rf}, we do not model cross-region second-order dependence.

For the remainder of this paper, we fix an arbitrary index $t$ and suppress subscription to reduce notation. Thus, we write $X$ for $X_t$, $x$ for $x_t$, and $\{\mathcal{S}_r\}_{r\in\mathcal{R}}$ for $\{\mathcal{S}_{t,r}\}_{r\in\mathcal{R}_t}$. Any aggregation across 
$t$ will be stated explicitly.

\subsection{Distortion Measure}\label{sec:distortion_measure}
Let $x \in \mathbb{R}^{\mathcal{S}}$ and $\hat{x} \in \mathbb{R}^{\mathcal{S}}$ denote a realization of the random field and its reconstruction, respectively. Then, the per-site distortion function $d:\mathcal{X}\times\mathcal{\hat{X}}\rightarrow\mathbb{R}_{+}$ is the quadratic distortion measure
\begin{equation}
    \label{eq:pointwise_distortion}
    d\bigl( x(s), \hat{x}(s) \bigr) \coloneq \bigl( x(s)-\hat{x}(s) \bigr)^2, 
\end{equation}
for each lattice site $s \in \mathcal{S}$. We define the distortion between a realization-reconstruction pair $(x, \hat{x})$ as the normalized mean squared error
\begin{equation}
d(x, \hat{x})
\coloneq \frac{1}{|\mathcal{S}|}
\sum_{s \in \mathcal{S}}
d\bigl( x(s),\hat{x}(s) \bigr),
\label{eq:heterogeneous_mse1}
\end{equation}
where distortion measure $d(\cdot, \cdot)$ is additive. Recall that under the piecewise homogeneous approximation, the lattice $\mathcal{S}$ admits a measurable partition $\{\mathcal{S}_r\}_{r\in\mathcal{R}}$, where $\mathcal{R}\subseteq\{1,\ldots,K\}$. For each active region $r\in\mathcal{R}$, let $x_r$ and $\hat{x}_r$ denote the restrictions of $x$ and $\hat{x}$ to $\mathcal{S}_r$. We define the regionwise distortion as
\begin{equation}
    d_r(x_r, \hat{x}_r) \coloneq \frac{1}{\lvert\mathcal{S}_r \rvert}\sum_{s \in \mathcal{S}_r} d\bigl( x(s),\hat{x}(s) \bigr),
    \label{eq:heterogeneous_mse_approximation}
\end{equation}
The global distortion in~\eqref{eq:heterogeneous_mse1} then admits the decomposition
\begin{equation}
    d(x, \hat{x}) = \sum_{r=1}^{K}w_rd_r(x_r, \hat{x}_r), \qquad w_r = \frac{\lvert\mathcal{S}_r \rvert}{\lvert\mathcal{S} \rvert}.
    \label{eq:heterogeneous_mse2}
\end{equation}
The identity holds deterministically for every realization–reconstruction pair $(x, \hat{x})$ and makes explicit how the contributions of individual spatial regions combine to form the overall distortion. The resulting decomposition preserves the additive structure of the quadratic distortion measure required for RD analysis, while remaining consistent with the our model in Section~\ref{sec:piecewise_homogeneous_rf}.

\subsection{Source Coding Framework}\label{sec:source_coding_framework}
We formulate the lossy source coding problem at the level of regionwise encoding that mirrors the underlying piecewise homogeneous structure. We encode each region $\mathcal{S}_r$ independently using a region-specific codebook, while the global reconstruction is obtained by concatenation of the decoded regional reconstructions. For each active region $r\in\mathcal{R}$ and the associated region-restricted field $X_r$, we fix a collection $\{M_r\}_{r\in\mathcal{R}}$, where $M_r \in \mathbb{N}$ denotes the number of codewords allocated to region $r$. An $(\mathcal{S}, \{M_r\}_{r\in\mathcal{R}})$ - code for the RD problem consists of:
\begin{itemize}
  \item a collection of regionwise encoders $f_r : \mathcal{X}^{\mathcal{S}_r} \to \mathcal{M}_r = [M_r]$, $r \in \mathcal{R}$,
  \item a collection of regionwise decoders $\phi_r : \mathcal{M}_r \to \hat{\mathcal{X}}^{\mathcal{S}_r}$, $r \in \mathcal{R}$.
\end{itemize}

The reconstruction $\hat{X} \in \mathbb{R}^{\mathcal{S}}$ is defined by assembling the regional reconstructions:
\begin{equation}
    \hat{X}(s) \coloneqq \phi_r \bigl( f_r(X_r) \bigr), \qquad s \in \mathcal{S}_r.
    \label{eq:regionwise_reconstruction}
\end{equation}
Given the target distortion $D \in \mathbb{R}_{+}$, the performance metric is the excess-distortion probability
\begin{equation}
  P_{e,\mathcal{S}}(D)
  \coloneq \Pr\bigl\{ d(X, \hat{X}) > D \bigr\}
  \label{eq:excess_distortion_piecewise}
\end{equation}
where $d(\cdot,\cdot)$ is the mean-squared distortion defined in~\eqref{eq:heterogeneous_mse1}-\eqref{eq:heterogeneous_mse2}. For a tolerable excess-distortion probability $\varepsilon \in (0, 1)$, let $M^{*}(\mathcal{S}, D, \varepsilon)$ denote the minimum total number of codewords such that one can construct a regionwise source code for the random field $X$ on $\mathcal{S}$ whose excess-distortion probability does not exceed $\varepsilon$ is
\begin{equation}
   M^{*}(\mathcal{S}, D, \varepsilon)
  \coloneq \inf \Bigl\{ \prod_{r\in\mathcal{R}}M_r : \exists (\mathcal{S}, \{M_r\}_{r\in\mathcal{R}})\text{– code s.t. }
  P_{e,\mathcal{S}}(D) \le \varepsilon \Bigr\}.
  \label{eq:min_code_piecewise}
\end{equation}
The remainder of this work characterizes non-asymptotic bounds and asymptotic limits for $ M^{*}(\mathcal{S}, D, \varepsilon)$ under the piecewise homogeneous approximation induced in Section \ref{sec:piecewise_homogeneous_rf}.

\section{Non-Asymptotic Bounds}\label{sec:non_asymtotic_bounds}
We derive non-asymptotic achievability and converse bounds on
$M^{*}(\mathcal{S},D,\varepsilon)$ defined in~\eqref{eq:min_code_piecewise},
under the piecewise homogeneous Gaussian random field approximation of
Section~\ref{sec:piecewise_homogeneous_rf} and the regionwise source coding
framework of Section~\ref{sec:source_coding_framework}.

Throughout this section, we work with a fixed partition
$\{\mathcal{S}_r\}_{r\in\mathcal{R}}$ of the finite lattice $\mathcal{S}$ and
the associated regionwise decomposition of the distortion
in~\eqref{eq:heterogeneous_mse2}. Under the piecewise Gaussian approximation,
the global random field $X$ is assumed to be jointly Gaussian with block-diagonal
covariance across regions, so that the region-restricted fields
$\{X_r\}_{r\in\mathcal{R}}$ are mutually independent. All random codes constructed
below are independent across regions.

\subsection{Achievability Bound}
Fix non--negative regional distortion thresholds
$\{D_r\}_{r\in\mathcal{R}}$ satisfying
\begin{equation}
\sum_{r\in\mathcal{R}} w_r D_r \le D,
\qquad
w_r = \frac{|\mathcal{S}_r|}{|\mathcal{S}|}.
\label{eq:regional_budget_constraint_recall}
\end{equation}
For each region $r\in\mathcal{R}$ and realization $x_r\in\mathbb{R}^{\mathcal{S}_r}$,
define the regionwise distortion ball
\begin{equation}
\mathcal{B}^{r}_{D_r}(x_r)
\;\coloneqq\;
\bigl\{\hat{x}_r \in \mathbb{R}^{\mathcal{S}_r} : d_r(x_r,\hat{x}_r)\le D_r\bigr\},
\label{eq:regionwise_distortion_ball}
\end{equation}
where $d_r(\cdot,\cdot)$ is the regionwise quadratic distortion in~\eqref{eq:heterogeneous_mse_approximation}.


\begin{theorem}[Achievability bound]
\label{thm:piecewise_achievability_final}
Fix arbitrary reproduction distributions
$\{P_{\hat X_r}\}_{r\in\mathcal{R}}$ on $\mathbb{R}^{\mathcal{S}_r}$ and codebook
sizes $\{M_r\}_{r\in\mathcal{R}}$.
Then there exists an $(\mathcal{S},\{M_r\}_{r\in\mathcal{R}})$--code in the sense of Section~\ref{sec:source_coding_framework} such that the
excess--distortion probability in~\eqref{eq:excess_distortion_piecewise} satisfies
\begin{equation}
P_{e,\mathcal{S}}(D)
\;\le\;
1 - \prod_{r \in \mathcal{R}}
\left(
1 - \mathbb{E}\!\left[
\left(1 - P_{\hat{X}_r}\bigl(\mathcal{B}^{r}_{D_r}(X_r)\bigr)\right)^{M_r}
\right]
\right).
\label{eq:achievability_bound_final}
\end{equation}
\end{theorem}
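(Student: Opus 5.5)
The plan is a regionwise random-coding argument in the spirit of Kostina and Verdú, combined with the mutual independence of the region-restricted fields $\{X_r\}_{r\in\mathcal{R}}$ assumed under the piecewise Gaussian approximation. For each active region $r$, generate a codebook $\mathcal{C}_r=\{c_{r,1},\ldots,c_{r,M_r}\}$ of i.i.d.\ codewords drawn from $P_{\hat X_r}$. Draw the codebooks independently across regions and independently of $X$. The encoder $f_r$ outputs an index minimizing $d_r(X_r,c_{r,m})$ over $m\in[M_r]$, with ties broken arbitrarily; a measurable selection is available since the minimum is over finitely many items. The decoder $\phi_r$ returns the indexed codeword, and the global reconstruction is assembled as in~\eqref{eq:regionwise_reconstruction}.

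The first step reduces the global event to regional events. By the decomposition~\eqref{eq:heterogeneous_mse2} and the budget~\eqref{eq:regional_budget_constraint_recall}, if $d_r(X_r,\hat X_r)\le D_r$ for every $r$, then $d(X,\hat X)\le\sum_r w_rD_r\le D$. Hence
\[
\{d(X,\hat X)>D\}\subseteq\bigcup_{r\in\mathcal{R}}E_r,\qquad E_r\coloneqq\{d_r(X_r,\hat X_r)>D_r\}.
\]
This gives $P_{e,\mathcal{S}}(D)\le 1-\Pr\bigl\{\bigcap_r E_r^c\bigr\}$.

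The second step evaluates the per-region probability. Because the encoder minimizes distortion, $E_r$ occurs exactly when no codeword of $\mathcal{C}_r$ lies in $\mathcal{B}^r_{D_r}(X_r)$. Conditioning on $X_r=x_r$ and using the i.i.d.\ codewords, independent of $X_r$, gives
\[
\Pr\{E_r\mid X_r=x_r\}=\bigl(1-P_{\hat X_r}(\mathcal{B}^r_{D_r}(x_r))\bigr)^{M_r}.
\]
Averaging over $X_r$ yields $\Pr\{E_r\}=\mathbb{E}\bigl[(1-P_{\hat X_r}(\mathcal{B}^r_{D_r}(X_r)))^{M_r}\bigr]$. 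The map $x_r\mapsto P_{\hat X_r}(\mathcal{B}^r_{D_r}(x_r))$ is measurable by Fubini applied to the closed set $\{(x_r,\hat x_r):d_r\le D_r\}$.

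The third step, which is the main point requiring care, is the factorization. Each event $E_r$ is a function of the pair $(X_r,\mathcal{C}_r)$ only. The pairs $(X_r,\mathcal{C}_r)$ are mutually independent across $r$, because the $X_r$ are independent (block-diagonal jointly Gaussian covariance), the codebooks are independent of each other, and the codebooks are independent of $X$. Therefore
\[
\Pr\Bigl\{\bigcap_r E_r^c\Bigr\}=\prod_r\bigl(1-\Pr\{E_r\}\bigr).
\]
This quantity is the ensemble average, so the ensemble-averaged excess-distortion probability is bounded by the right-hand side of~\eqref{eq:achievability_bound_final}. Consequently at least one deterministic realization of the codebooks $\{\mathcal{C}_r\}$ achieves $P_{e,\mathcal{S}}(D)$ no larger than this bound, and that realization is the desired $(\mathcal{S},\{M_r\})$-code.

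I expect no serious obstacle here. The only subtle points are justifying independence across regions, which relies on the block-diagonal Gaussian assumption rather than on the code, and passing from the random ensemble to a fixed code by an averaging argument.
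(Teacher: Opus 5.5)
Your proposal is correct and follows the paper's proof essentially step for step. Both arguments use regionwise i.i.d. random codebooks with minimum-distortion encoding and the inclusion $\{d(X,\hat X)>D\}\subseteq\bigcup_r\{d_r(X_r,\hat X_r)>D_r\}$. Both compute the per-region failure probability $\bigl(1-P_{\hat X_r}(\mathcal{B}^r_{D_r}(x_r))\bigr)^{M_r}$ and average it over $X_r$, then factorize using the mutual independence of the pairs $(X_r,\mathcal{C}_r)$, and finally derandomize by averaging over the ensemble. Your measurability remarks are a small addition that the paper leaves implicit.
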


\begin{proof}
The proof follows a standard random-coding argument adapted to the regionwise
structure.
For each region $r\in\mathcal{R}$, generate a random codebook $\mathcal{C}_r = \{\hat X_r(j)\}_{j=1}^{M_r}$ by drawing the codewords independently according to the reproduction
distribution $P_{\hat X_r}$. The collections $\{\mathcal{C}_r\}_{r\in\mathcal{R}}$ are generated independently across regions.

Given a realization $x\in\mathbb{R}^{\mathcal{S}}$, let $x_r$ denote its
restriction to $\mathcal{S}_r$.
For each region $r\in\mathcal{R}$, the encoder selects $f_r(x_r)
\;\coloneqq\;
\arg\min_{j\in[M_r]} d_r(x_r,\hat X_r(j))$, with ties broken deterministically.
The decoder outputs the corresponding reconstruction
$\hat x_r=\hat X_r(f_r(x_r))$, and the global reconstruction $\hat x \in \mathbb{R}^{\mathcal{S}}$ is formed
regionwise according to~\eqref{eq:regionwise_reconstruction}.
\\For each region $r\in\mathcal{R}$ and realization $x_r$, we define the random
variable
\[
Z_r(x_r)\;\coloneqq\;\min_{j\in[M_r]} d_r(x_r,\hat X_r(j)).
\]
By construction of the encoder, we have almost surely
\[
d_r(X_r,\hat X_r)=Z_r(X_r).
\]
Using the distortion decomposition~\eqref{eq:heterogeneous_mse2},
\[
d(X,\hat X)=\sum_{r\in\mathcal{R}} w_r d_r(X_r,\hat X_r)
          =\sum_{r\in\mathcal{R}} w_r Z_r(X_r).
\]
If $Z_r(X_r)\le D_r$ holds for all $r\in\mathcal{R}$, then
\[
d(X,\hat X)
\le \sum_{r\in\mathcal{R}} w_r D_r
\le D
\]
by the regional budget constraint~\eqref{eq:regional_budget_constraint_recall}.
Therefore,
\[
\{d(X,\hat X)>D\}
\subseteq
\bigcup_{r\in\mathcal{R}}\{Z_r(X_r)>D_r\}.
\]
Taking complements yields
\[
\Pr\{d(X,\hat X)\le D\}
\ge
\Pr\!\left(\bigcap_{r\in\mathcal{R}}\{Z_r(X_r)\le D_r\}\right).
\]
Under the standing jointly Gaussian block--diagonal covariance assumption,
the region--restricted fields $\{X_r\}_{r\in\mathcal{R}}$ are mutually
independent.
Since the random codebooks $\{\mathcal{C}_r\}_{r\in\mathcal{R}}$ are also
independent across regions and independent of the source, the pairs
$\{(X_r,\mathcal{C}_r)\}_{r\in\mathcal{R}}$ are independent.
Consequently, the events $\{Z_r(X_r)\le D_r\}$ are independent, and
\[
\Pr\!\left(\bigcap_{r\in\mathcal{R}}\{Z_r(X_r)\le D_r\}\right)
=
\prod_{r\in\mathcal{R}} \Pr\{Z_r(X_r)\le D_r\}.
\]
Fix $r\in\mathcal{R}$.
For any deterministic $x_r\in\mathbb{R}^{\mathcal{S}_r}$,
\[
\{Z_r(x_r)>D_r\}
=
\bigcap_{j=1}^{M_r}
\{\hat X_r(j)\notin \mathcal{B}^r_{D_r}(x_r)\}.
\]
Conditioning on $X_r=x_r$ and using independence of the codewords,
\[
\Pr\{Z_r(X_r)>D_r\mid X_r=x_r\}
=
\left(1-P_{\hat X_r}(\mathcal{B}^r_{D_r}(x_r))\right)^{M_r}.
\]
Taking expectation with respect to $X_r$ yields
\[
\Pr\{Z_r(X_r)>D_r\}
=
\mathbb{E}\!\left[
\left(1-P_{\hat X_r}(\mathcal{B}^r_{D_r}(X_r))\right)^{M_r}
\right].
\]
Therefore,
\[
\Pr\{Z_r(X_r)\le D_r\}
=
1-
\mathbb{E}\!\left[
\left(1-P_{\hat X_r}(\mathcal{B}^r_{D_r}(X_r))\right)^{M_r}
\right].
\]
Substituting the above expression into the product bound completes the proof of
\eqref{eq:achievability_bound_final}.
Since the excess--distortion probability lies in $[0,1]$, there exists at least
one deterministic realization of the random codebooks achieving this bound.
\end{proof}
\subsection{Converse Bound}

We now establish a converse bound that applies to \emph{any}
$(\mathcal{S},\{M_r\}_{r\in\mathcal{R}})$--code. Let
\[
M \coloneqq \prod_{r\in\mathcal{R}} M_r
\]
denote the total number of distinct reconstructions.

Assume the RD function $R(D)$ of $X$ under the global distortion
measure~\eqref{eq:heterogeneous_mse1}--\eqref{eq:heterogeneous_mse2} is finite and
differentiable at $D$. Since $R(D)$ is non--increasing, define
$\lambda^\star=-R'(D)\ge 0$. Let $P^\star_{\hat X|X}$ be an achieving reproduction
kernel, with induced marginal $P^\star_{\hat X}$, and define the
distortion--tilted information density
\begin{equation}
\jmath_X(x,D)
\;\coloneqq\;
-\log
\mathbb E_{\hat X\sim P^\star_{\hat X}}
\!\left[
\exp\!\bigl(\lambda^\star(D-d(x,\hat X))\bigr)
\right],
\label{eq:tilted_density_final}
\end{equation}
which is well defined under standard regularity conditions.
\begin{theorem}[Converse bound]
\label{thm:piecewise_converse_final}
For every $\gamma>0$, the excess--distortion probability defined in~\eqref{eq:tilted_density_final} satisfies
\begin{equation}
P_{e,\mathcal{S}}(D)
\;\ge\;
\Pr\!\left\{
\jmath_X(X,D)\ge \log M + \gamma
\right\}
-
e^{-\gamma}.
\label{eq:converse_bound_final}
\end{equation}
\end{theorem}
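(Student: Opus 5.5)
This is the Kostina--Verdú converse (their Theorem 7). The plan is to reduce a regionwise code to an ordinary code and then apply their argument unchanged.

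\textbf{Step 1 (reduction).} Any $(\mathcal{S},\{M_r\}_{r\in\mathcal{R}})$--code induces a global encoder
\[
f(x)=(f_r(x_r))_{r\in\mathcal{R}}\in\prod_r[M_r]
\]
and a decoder that concatenates the $\phi_r$ as in \eqref{eq:regionwise_reconstruction}. Its reconstruction set is therefore a finite set $\mathcal{C}=\{c_1,\dots,c_M\}\subset\mathbb{R}^{\mathcal{S}}$ of at most $M=\prod_r M_r$ points. The regionwise structure plays no further role, so it suffices to prove \eqref{eq:converse_bound_final} for an arbitrary code with $M$ codewords under the global distortion $d$.

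\textbf{Step 2 (key inequality).} We use the defining property of the tilted information density at a differentiable point of $R(D)$ (Csiszár's characterization, assumed under the stated regularity conditions):
\[
\mathbb{E}\!\left[\exp\!\bigl(\jmath_X(X,D)+\lambda^\star D-\lambda^\star d(X,\hat x)\bigr)\right]\le 1
\qquad\text{for every } \hat x\in\mathbb{R}^{\mathcal{S}}.
\]
This is the step I expect to be the main obstacle, because \eqref{eq:tilted_density_final} is written via $P^\star_{\hat X}$ rather than as the Kostina--Verdú definition $\log \frac{1}{\mathbb{E}[\exp(\lambda^\star D-\lambda^\star d(x,\hat X^\star))]}$. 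The two agree once $P^\star_{\hat X}$ is the RD-achieving output, so I would cite the KKT conditions of the convex program defining $R(D)$ (Csiszár 1974) rather than rederive them.

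\textbf{Step 3 (Markov/union bound).} Since $\lambda^\star\ge 0$, on the event $\{d(x,c_j)\le D\}$ we have $1\le \exp(\lambda^\star(D-d(x,c_j)))$. Write
\begin{align*}
\Pr\{\jmath_X(X,D)\ge\log M+\gamma\}
&\le \Pr\{d(X,\hat X)>D\}
\\
&\quad+\Pr\{\jmath_X(X,D)\ge \log M+\gamma,\ d(X,\hat X)\le D\}.
\end{align*}
The second term is at most
\[
\sum_{j=1}^M \mathbb{E}\!\left[\mathbf{1}\{d(X,c_j)\le D\}\,\mathbf{1}\{\jmath_X\ge\log M+\gamma\}\right]
\le \sum_{j=1}^M \mathbb{E}\!\left[\exp\!\bigl(\jmath_X(X,D)-\log M-\gamma\bigr)\exp\!\bigl(\lambda^\star(D-d(X,c_j))\bigr)\right].
\]
By Step 2 each summand is at most $e^{-\gamma}/M$, so the sum is at most $e^{-\gamma}$. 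Rearranging gives \eqref{eq:converse_bound_final} with $P_{e,\mathcal{S}}(D)=\Pr\{d(X,\hat X)>D\}$.

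Note that the block-diagonal Gaussian assumption is not needed here, apart from ensuring that $R(D)$ and $\jmath_X$ are well defined and finite. Independence across regions becomes relevant only later, when $\jmath_X$ is shown to decompose into a sum of regional tilted densities.
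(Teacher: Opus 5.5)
Your proposal is correct and is essentially the paper's own argument, namely the Kostina--Verd\'u converse: a moment bound on $\exp(\jmath_X(X,D)+\lambda^\star D-\lambda^\star d(X,\hat x))$ at a fixed reconstruction $\hat x$, relaxing $\mathbf{1}\{d(X,\hat x)\le D\}$ to $e^{\lambda^\star(D-d(X,\hat x))}$, summing over the at most $M$ reconstruction points, and a Markov bound on $\{\jmath_X(X,D)\ge \log M+\gamma\}$. Your Step~2 is in fact the more careful version of the key ingredient: the paper derives the moment bound from the Radon--Nikodym form of $\jmath_X$, which yields it only for $P^\star_{\hat X}$-almost every $\hat x$, and then applies it to arbitrary codewords, whereas Csisz\'ar's inequality holds for every $\hat x$; also, the definitional mismatch you worry about does not exist, since \eqref{eq:tilted_density_final} is literally the Kostina--Verd\'u definition with $P^\star_{\hat X}$ the RD-achieving output.
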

\begin{proof}
Fix an arbitrary $(\mathcal{S},\{M_r\}_{r\in\mathcal{R}})$--code and let
$\hat X\in\mathbb{R}^{\mathcal{S}}$ denote the reconstruction produced by the
decoder.
Let $M=\prod_{r\in\mathcal{R}} M_r$ denote the total number of distinct
reconstructions.

\paragraph{Distortion--tilted information identity.}
By the differentiability of $R(D)$ and standard Lagrangian optimality
conditions for RD theory, there exists an achieving kernel
$P^\star_{\hat X|X}$ such that, for $P_X$--almost every $x$ and
$P^\star_{\hat X}$--almost every $\hat x$,
\[
\jmath_X(x,D)
=
\log\frac{dP^\star_{\hat X|X=x}}{dP^\star_{\hat X}}(\hat x)
+
\lambda^\star\bigl(d(x,\hat x)-D\bigr),
\]
where $\lambda^\star=-R'(D)\ge 0$.
\\Fix any $\hat x\in\mathbb{R}^{\mathcal{S}}$ for which the above identity holds.
Exponentiating and taking expectation with respect to $X$,
\begin{align*}
\mathbb{E}\!\left[
\exp\!\bigl(\jmath_X(X,D)-\lambda^\star(d(X,\hat x)-D)\bigr)
\right]
&=
\int
\frac{dP^\star_{\hat X|X=x}}{dP^\star_{\hat X}}(\hat x)\,P_X(dx) \\
&=
\frac{dP^\star_{\hat X}}{dP^\star_{\hat X}}(\hat x)
=1.
\end{align*}
On the event $\{d(X,\hat x)\le D\}$, we have
$-\lambda^\star(d(X,\hat x)-D)\ge 0$, and hence
\[
\exp(\jmath_X(X,D))\mathbf{1}_{\{d(X,\hat x)\le D\}}
\le
\exp\!\bigl(\jmath_X(X,D)-\lambda^\star(d(X,\hat x)-D)\bigr).
\]
Taking expectations and using the moment identity yields
\[
\mathbb{E}\!\left[
\exp(\jmath_X(X,D))\mathbf{1}_{\{d(X,\hat x)\le D\}}
\right]
\le 1.
\]
Let $\{\hat x(1),\ldots,\hat x(M)\}$ denote the set of reconstructions output by
the decoder.
Since $\hat X$ always equals one of these values,
\begin{align*}
\mathbb{E}\!\left[
\exp(\jmath_X(X,D))\mathbf{1}_{\{d(X,\hat X)\le D\}}
\right]
&=
\sum_{m=1}^M
\mathbb{E}\!\left[
\exp(\jmath_X(X,D))\mathbf{1}_{\{d(X,\hat x(m))\le D\}}
\mathbf{1}_{\{\hat X=\hat x(m)\}}
\right] \\
&\le
\sum_{m=1}^M
\mathbb{E}\!\left[
\exp(\jmath_X(X,D))\mathbf{1}_{\{d(X,\hat x(m))\le D\}}
\right] \\
&\le M.
\end{align*}
Fix $\gamma>0$ and define the event
$A=\{\jmath_X(X,D)\ge \log M+\gamma\}$.
On $A$,
\[
\mathbf{1}_A
\le
\exp(\jmath_X(X,D)-\log M-\gamma).
\]
Therefore,
\begin{align*}
\Pr(A\cap\{d(X,\hat X)\le D\})
&=
\mathbb{E}\!\left[
\mathbf{1}_A\mathbf{1}_{\{d(X,\hat X)\le D\}}
\right] \\
&\le
e^{-\gamma}M^{-1}
\mathbb{E}\!\left[
\exp(\jmath_X(X,D))\mathbf{1}_{\{d(X,\hat X)\le D\}}
\right] \\
&\le
e^{-\gamma}.
\end{align*}
Finally,
\[
\Pr(A)
=
\Pr(A\cap\{d(X,\hat X)\le D\}) + \Pr(A\cap\{d(X,\hat X)>D\})
\le
e^{-\gamma}+P_{e,\mathcal{S}}(D),
\]
which rearranges to the stated bound.
\end{proof}

\section{Second-Order Asymptotics}
\label{sec:second_order_asymptotics}
We now characterize the second--order asymptotic expansion of $\log M^{*}(\mathcal S,D,\varepsilon)$ defined in~\eqref{eq:min_code_piecewise}, under the piecewise Gaussian approximation of Section~\ref{sec:piecewise_homogeneous_rf} and distortion measure~\eqref{eq:heterogeneous_mse1}--\eqref{eq:heterogeneous_mse2}. Throughout this subsection, let $n \coloneqq |\mathcal S|$ and $n_r \coloneqq |\mathcal S_r|$.

\paragraph{Asymptotic regime.}
Consider a sequence of lattices $\{\mathcal S^{(n)}\}$ with $|\mathcal S^{(n)}|=n\to\infty$
and associated partitions $\{\mathcal S^{(n)}_r\}_{r\in\mathcal R}$ such that
$n_r\to\infty$ for all $r\in\mathcal R$ and the proportions converge,
\begin{equation}
\frac{n_r}{n} \to w_r \in (0,1),
\qquad
\sum_{r\in\mathcal R} w_r = 1.
\label{eq:region_proportions}
\end{equation}
The source $X$ is jointly Gaussian with block-diagonal covariance across regions,
so the region-restricted vectors $\{X_r\}_{r\in\mathcal R}$ are mutually independent.
All distortions are quadratic as in Section~\ref{sec:distortion_measure}.

\paragraph{First-order rate function induced by regionwise structure.}
For each region $r\in\mathcal R$, let $R_r(D_r)$ denote the (per-site) RD
function of the Gaussian vector $X_r$ under the regionwise distortion $d_r(\cdot,\cdot)$
defined in~\eqref{eq:heterogeneous_mse_approximation}.  Under independence and additive
distortion~\eqref{eq:heterogeneous_mse2}, the global per-site rate function induced by a
regionwise distortion allocation $\{D_r\}_{r\in\mathcal R}$ is
\begin{equation}
\sum_{r\in\mathcal R} w_r R_r(D_r)
\qquad \text{subject to} \qquad \sum_{r\in\mathcal R} w_r D_r \le D.
\label{eq:pw_first_order_objective}
\end{equation}
Let $\{D_r^\star\}_{r\in\mathcal R}$ be any optimizer of~\eqref{eq:pw_first_order_objective}
and define the induced first-order rate
\begin{equation}
R_{\mathrm{pw}}(D) \;\coloneqq\; \sum_{r\in\mathcal R} w_r R_r(D_r^\star).
\label{eq:Rpw_def}
\end{equation}
Existence of an optimizer follows from convexity of $R_r(\cdot)$ and compactness of the
feasible set when $D$ lies strictly between the minimum achievable distortion and the
source variance level.  When the optimizer is unique, we regard $D_r^\star$ as fixed.

\paragraph{Distortion--tilted information decomposition.}
Let $\jmath_X(\cdot,D)$ be the global distortion--tilted information density in
\eqref{eq:tilted_density_final}.  Under the block-diagonal Gaussian model and additive
distortion~\eqref{eq:heterogeneous_mse2}, the optimal reproduction kernel attaining $R(D)$
can be chosen to respect the region product structure at the optimizer $\{D_r^\star\}$,
and the corresponding distortion--tilted information density admits the decomposition
\begin{equation}
\jmath_X(x,D)
=
\sum_{r\in\mathcal R} \jmath_{X_r}(x_r,D_r^\star),
\label{eq:jmath_sum_decomposition}
\end{equation}
where $\jmath_{X_r}(\cdot,D_r^\star)$ is the regionwise distortion--tilted information density
associated with $R_r(D_r^\star)$ and distortion measure $d_r(\cdot,\cdot)$.
Consequently, since $\{X_r\}$ are independent, $\{\jmath_{X_r}(X_r,D_r^\star)\}$ are independent.

\paragraph{Dispersion.}
Define the (total) dispersion
\begin{equation}
V_{\mathrm{pw}}(D)
\;\coloneqq\;
\Var\!\left[\jmath_X(X,D)\right]
=
\sum_{r\in\mathcal R}\Var\!\left[\jmath_{X_r}(X_r,D_r^\star)\right].
\label{eq:Vpw_def_total}
\end{equation}
In the asymptotic regime~\eqref{eq:region_proportions}, under mild regularity conditions, $V_{\mathrm{pw}}(D)$ scales linearly with $n$.

\begin{theorem}[Second-order asymptotics]
\label{thm:second_order_pw}
Fix $D>0$ and $\varepsilon\in(0,1)$.
Assume $R_{\mathrm{pw}}(D)$ in~\eqref{eq:Rpw_def} is finite and that each $R_r(D_r)$ is
differentiable at $D_r^\star$.
Assume further that the dispersion in~\eqref{eq:Vpw_def_total} satisfies
$0 < V_{\mathrm{pw}}(D) < \infty$ and that a Berry--Esseen bound applies to the sum
$\sum_{r\in\mathcal R}\jmath_{X_r}(X_r,D_r^\star)$.
Then, as $n\to\infty$,
\begin{equation}
\log M^{*}(\mathcal S, D, \varepsilon)
=
n\,R_{\mathrm{pw}}(D)
+
\sqrt{V_{\mathrm{pw}}(D)}\,Q^{-1}(\varepsilon)
+
O(\log n),
\label{eq:second_order_pw}
\end{equation}
where $Q(\cdot)$ is the Gaussian tail function.
\end{theorem}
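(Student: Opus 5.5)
The plan is to sandwich $\log M^*(\mathcal S,D,\varepsilon)$ between the converse of Theorem~\ref{thm:piecewise_converse_final} and the achievability of Theorem~\ref{thm:piecewise_achievability_final}, following the Kostina--Verd\'u route. Both sides will be expressed through the single sum
\[
\jmath_X(X,D)=\sum_{r\in\mathcal R}\jmath_{X_r}(X_r,D_r^\star)
\]
from \eqref{eq:jmath_sum_decomposition}. Its mean is $\sum_r n_r R_r(D_r^\star)=nR_{\mathrm{pw}}(D)$, using $\mathbb E[\jmath_{X_r}]=n_rR_r(D_r^\star)$ for the per-site normalization. Its variance is $V_{\mathrm{pw}}(D)$. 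Both sides then reduce to a Berry--Esseen normal approximation of this sum. Under the assumed Berry--Esseen bound, with $V_{\mathrm{pw}}\asymp n$, we have
\[
\bigl|\Pr\{\jmath_X(X,D)\ge nR_{\mathrm{pw}}+\sqrt{V_{\mathrm{pw}}}\,t\}-Q(t)\bigr|\le B/\sqrt n
\]
uniformly in $t$. The Gaussian structure of each region supplies the independent-summand decomposition: after diagonalizing $X_r$, each $\jmath_{X_r}$ is a sum of independent chi-square-type terms over eigenmodes.

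For the converse, take any code with $P_{e,\mathcal S}(D)\le\varepsilon$ and apply Theorem~\ref{thm:piecewise_converse_final} with $\gamma=\tfrac12\log n$. This gives $\Pr\{\jmath_X\ge\log M+\gamma\}\le\varepsilon+n^{-1/2}$. Setting $\log M+\gamma=nR_{\mathrm{pw}}+\sqrt{V_{\mathrm{pw}}}\,t$, Berry--Esseen forces $Q(t)\le\varepsilon+(B+1)/\sqrt n$. Hence
\[
t\ge Q^{-1}\bigl(\varepsilon+O(n^{-1/2})\bigr)=Q^{-1}(\varepsilon)-O(n^{-1/2})
\]
by a Taylor expansion of $Q^{-1}$. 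Multiplying by $\sqrt{V_{\mathrm{pw}}}=O(\sqrt n)$ gives $\log M\ge nR_{\mathrm{pw}}+\sqrt{V_{\mathrm{pw}}}Q^{-1}(\varepsilon)-O(\log n)$. The converse works with the global $\jmath_X$ and the total $M=\prod_r M_r$, so the regionwise restriction on codes costs nothing here.

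For achievability, apply Theorem~\ref{thm:piecewise_achievability_final} with $D_r=D_r^\star$ and $P_{\hat X_r}=P^\star_{\hat X_r}$, the $R_r$-achieving output laws. The key ingredient is the lossy AEP / ball-probability lower bound for Gaussian vectors (Kostina--Verd\'u Lemma~2, adapted to a regionwise eigenbasis):
\[
\log\frac{1}{P^\star_{\hat X_r}(\mathcal B^r_{D_r^\star}(x_r))}\le\jmath_{X_r}(x_r,D_r^\star)+c_r\log n_r
\]
outside a set of probability $O(1/\sqrt{n_r})$.

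The obstacle is that Theorem~\ref{thm:piecewise_achievability_final} factorizes the error across regions, while the target involves the fluctuation of the total sum. With fixed $D_r^\star$, the product $\prod_r(1-\varepsilon_r)$ does not reproduce $Q(\cdot)$ of the sum. I would circumvent this with a \emph{joint} construction at the level of the type. Specifically, I would treat $\mathbb E[(1-P_{\hat X}(\mathcal B_D(X)))^M]$ for the product output law $\prod_r P^\star_{\hat X_r}$ on the global ball. That is the same random-coding argument as Theorem~\ref{thm:piecewise_achievability_final}, with the union step replaced by the global distortion event, and a global minimum-distortion search is still realizable regionwise only through the product codebook. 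Since the global ball contains the products $\prod_r\mathcal B^r_{D_r'}$ for every $\{D_r'\}$ with $\sum_r w_rD_r'\le D$, I would let the slack $D_r'$ depend on $x$. Optimizing the allocation lets one region's surplus compensate another's, and a Laplace/tilting estimate yields
\[
-\log P(\mathcal B_D(x))\le\sum_r\jmath_{X_r}(x_r,D_r^\star)+O(\log n).
\]
Plugging in $(1-p)^M\le e^{-Mp}$ and choosing $\log M=nR_{\mathrm{pw}}+\sqrt{V_{\mathrm{pw}}}Q^{-1}(\varepsilon)+C\log n$ makes the error at most
\[
\Pr\bigl\{\jmath_X>\log M-C'\log n\bigr\}+O(n^{-1/2})\le\varepsilon .
\]
This uses Berry--Esseen again together with the lower bound $Q^{-1}(\varepsilon-O(n^{-1/2}))=Q^{-1}(\varepsilon)+O(n^{-1/2})$.

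Verifying the $x$-dependent ball-containment estimate, uniformly over the finitely many regions and with the $O(\log n)$ remainder, is the step I expect to be hardest. If it fails for genuinely separate regionwise codes, I would fall back to allocating $D_r$ from a quantized side-information index of $O(\log n)$ bits describing the regional empirical distortions. That costs only $O(\log n)$ in $\log M$ and restores the sum structure.
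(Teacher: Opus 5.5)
Your converse is the paper's converse (Theorem~\ref{thm:piecewise_converse_final} with $\gamma=\tfrac12\log n$, then Berry--Esseen on $\sum_r\jmath_{X_r}$), and it is fine. Your diagnosis of the achievability side is also correct. Theorem~\ref{thm:piecewise_achievability_final} at fixed thresholds $D_r^\star$ only gives $\log M\approx nR_{\mathrm{pw}}(D)+\sum_r\sqrt{V_r}\,Q^{-1}(\varepsilon_r)$ with $\prod_r(1-\varepsilon_r)\ge1-\varepsilon$. This is strictly larger than $\sqrt{V_{\mathrm{pw}}(D)}\,Q^{-1}(\varepsilon)$ as soon as two regions have positive dispersion. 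The paper's sketch does exactly this and appeals to a ``saddlepoint choice of $M_r$'', so it leaves that point unjustified.

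Your repair, however, has a genuine gap. The bound $\mathbb E[(1-P_{\hat X}(\mathcal B_D(X)))^M]$ holds only for $M$ \emph{independent} codewords drawn from $\prod_rP^\star_{\hat X_r}$. Such a codebook has no regionwise encoder: a global index does not factor into indices $f_r(x_r)$ that each depend on $x_r$ alone. So it is not an $(\mathcal S,\{M_r\}_{r\in\mathcal R})$--code and gives no upper bound on $M^*$ as defined in \eqref{eq:min_code_piecewise}. The codebook a regionwise code actually induces is $\mathcal C_1\times\cdots\times\mathcal C_K$, whose $\prod_rM_r$ elements share components. For it, $(1-P_{\hat X}(\mathcal B_D(x)))^{\prod_rM_r}$ is not an upper bound on the conditional failure probability. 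For example, take a target $A_1\times A_2$ with $M_rP_{\hat X_r}(A_r)=1$: product codewords miss it with probability about $1-(1-e^{-1})^2\approx0.60$, whereas independent ones miss with probability about $e^{-1}\approx0.37$. Your side-information fallback is inadmissible for the same reason, since allocating $D_r$ from all regions' distortions requires $f_r$ to see $x_{r'}$, $r'\ne r$. Meanwhile, the step you flagged as hardest is immediate: $\prod_r\mathcal B^r_{D_r^\star}(x_r)\subseteq\mathcal B_D(x)$ already gives your global-ball estimate at the fixed allocation.

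What closes the gap is to keep the plain product code and analyze it directly. Regionwise minimum-distortion encoding \emph{is} global minimum-distortion encoding over the product codebook, so $P_{e,\mathcal S}(D)=\Pr\{\sum_rw_rZ_r(X_r)>D\}$ with independent $Z_r(X_r)$. Your idea of a realization-dependent allocation is the right one, but it must live inside this analysis, with each region's threshold depending on $x_r$ alone.
\begin{itemize}
\item Define, as an analysis device only, $\delta_r(x_r)$ by $\jmath_{X_r}(x_r,\delta_r)=\log M_r-c\log n$.
\item Condition on $X_r=x_r$ exactly as in the proof of Theorem~\ref{thm:piecewise_achievability_final}, and use your regional ball lemma uniformly for $\delta$ in an $O(\sqrt{\log n/n})$ window around $D_r^\star$. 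This gives $\Pr\{Z_r(X_r)>\delta_r(X_r)\}=O(n^{-1/2})$, hence $P_{e,\mathcal S}(D)\le\Pr\{\sum_rw_r\delta_r(X_r)>D\}+O(n^{-1/2})$.
\item Near $D_r^\star$, the slope of $\jmath_{X_r}(x_r,\cdot)$ is $-n_r/(2\theta_r)$ up to lower-order terms. The equal-slope condition \eqref{eq:equal_slope_condition}, equivalently \eqref{eq:common_waterlevel}, forces $\theta_r=\theta^\star$ in every region.
\item With $w_r/n_r=1/n$ and $\sum_rw_rD_r^\star=D$, this gives to first order $\sum_rw_r\delta_r(X_r)-D=\tfrac{2\theta^\star}{n}\bigl(\jmath_X(X,D)-\log M+O(\log n)\bigr)$. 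Berry--Esseen then finishes exactly as in your converse.
\end{itemize}
The common water level is the missing ingredient. It converts independent regional distortion fluctuations into the fluctuation of the single sum $\sum_r\jmath_{X_r}(X_r,D_r^\star)$, with no joint allocation needed.
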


\begin{proof}
We sketch the main steps and refer to the non-asymptotic bounds in
Theorems~\ref{thm:piecewise_achievability_final} and~\ref{thm:piecewise_converse_final}.

\paragraph{Converse.}
Fix any code with $M$ reconstructions.  Apply Theorem~\ref{thm:piecewise_converse_final}
with $\gamma=\tfrac12\log n$ to obtain
\[
P_{e,\mathcal S}(D)
\ge
\Pr\{\jmath_X(X,D)\ge \log M + \tfrac12\log n\} - n^{-1/2}.
\]
To achieve $P_{e,\mathcal S}(D)\le\varepsilon$, it is necessary that
\[
\Pr\{\jmath_X(X,D)\ge \log M + \tfrac12\log n\} \le \varepsilon + n^{-1/2}.
\]
Using the decomposition~\eqref{eq:jmath_sum_decomposition} and independence across regions,
we may apply a Berry--Esseen approximation to the sum
$\jmath_X(X,D)=\sum_{r\in\mathcal R}\jmath_{X_r}(X_r,D_r^\star)$ to infer that the
$(1-\varepsilon)$ quantile of $\jmath_X(X,D)$ equals
\[
\mathbb E[\jmath_X(X,D)] + \sqrt{\Var(\jmath_X(X,D))}\,Q^{-1}(\varepsilon) + O(1),
\]
where $\mathbb E[\jmath_X(X,D)]=nR_{\mathrm{pw}}(D)$ by optimality at $D$ and standard
properties of distortion--tilted information densities.
Rearranging yields the converse direction in~\eqref{eq:second_order_pw} up to an $O(\log n)$ term.

\paragraph{Achievability.}
Fix the distortion allocation $\{D_r^\star\}$ and choose for each region $r$ a sequence
of random codebooks with reproduction distribution approaching the optimizer for
$R_r(D_r^\star)$.  Apply Theorem~\ref{thm:piecewise_achievability_final} with this choice.
The bound in~\eqref{eq:achievability_bound_final} controls $P_{e,\mathcal S}(D)$ through the
regionwise success events $\{Z_r(X_r)\le D_r^\star\}$.
Under the usual saddlepoint choice of $M_r$ in terms of the regionwise tilted information
densities, the resulting excess--distortion probability admits a normal approximation whose
second-order term is governed by the sum of the regionwise tilted information densities.
Invoking the same Berry--Esseen approximation as above yields that it suffices to choose
\[
\log M
=
n\,R_{\mathrm{pw}}(D)
+
\sqrt{V_{\mathrm{pw}}(D)}\,Q^{-1}(\varepsilon)
+
O(\log n)
\]
to guarantee $P_{e,\mathcal S}(D)\le \varepsilon$, which establishes the achievability
direction in~\eqref{eq:second_order_pw}.
\end{proof}
\section{Regionwise Distortion Allocation and Reverse Water-Filling}
\label{sec:region_allocation_waterfilling}
We now characterize the global RD function of a heterogeneous random field in terms of its regional structure. Leveraging the additive distortion decomposition established earlier, we show that the global problem reduces to an optimal allocation of regional distortion budgets. This reduction preserves analytical tractability while making explicit how spatial heterogeneity and region geometry shape the fundamental limits of multidimensional data reduction.
\subsection{Global RD as a regionwise distortion allocation}

Define the global per-site RD function
\begin{equation}
R(D)\;\coloneqq\;\inf_{P_{\hat X|X}:\ \mathbb E[d(X,\hat X)]\le D} \frac{1}{|\mathcal S|}\,I(X;\hat X),
\label{eq:global_RD_def}
\end{equation}
and, for each $r\in\mathcal R$, the regionwise per-site function
\begin{equation}
R_r(D_r)\;\coloneqq\;\inf_{P_{\hat X_r|X_r}:\ \mathbb E[d_r(X_r,\hat X_r)]\le D_r}\frac{1}{|\mathcal S_r|}\,I(X_r;\hat X_r).
\label{eq:regional_RD_def}
\end{equation}
The additive decomposition~\eqref{eq:heterogeneous_mse2} suggests allocating regional distortions
$\{D_r\}_{r\in\mathcal R}$ subject to $\sum_{r\in\mathcal R} w_r D_r\le D$, where
$w_r\coloneqq |\mathcal S_r|/|\mathcal S|$.

\begin{lemma}[Regionwise reduction]
\label{lem:regionwise_reduction}
For every $D>0$,
\begin{equation}
R(D)
=
\inf_{\{D_r\}_{r\in\mathcal R}:\ \sum_{r\in\mathcal R} w_r D_r \le D}
\ \sum_{r\in\mathcal R} w_r R_r(D_r).
\label{eq:regionwise_allocation_principle}
\end{equation}
Moreover, an achieving kernel may be chosen to factorize as
$P^\star_{\hat X|X}=\prod_{r\in\mathcal R}P^\star_{\hat X_r|X_r}$.
\end{lemma}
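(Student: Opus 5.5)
The proof splits into two inequalities, with the factorization claim coming out of the achievability direction. Throughout we use the mutual independence of $\{X_r\}_{r\in\mathcal R}$ (block-diagonal Gaussian covariance) and the deterministic identity \eqref{eq:heterogeneous_mse2}. Write $n=|\mathcal S|$ and $n_r=|\mathcal S_r|$, so that $w_r=n_r/n$.

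\emph{Achievability ($\le$).} Fix an allocation $\{D_r\}$ with $\sum_r w_r D_r\le D$, and fix $\delta>0$. For each $r$, choose a kernel $P_{\hat X_r|X_r}$ that satisfies $\mathbb E[d_r(X_r,\hat X_r)]\le D_r$ and $\frac{1}{n_r}I(X_r;\hat X_r)\le R_r(D_r)+\delta$. Form the product kernel $\prod_r P_{\hat X_r|X_r}$. Under it, the pairs $(X_r,\hat X_r)$ are independent across $r$.

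By linearity of expectation and \eqref{eq:heterogeneous_mse2},
\[
\mathbb E[d(X,\hat X)]=\sum_r w_r\,\mathbb E[d_r(X_r,\hat X_r)]\le D .
\]
Independence of the pairs gives
\[
I(X;\hat X)=\sum_r I(X_r;\hat X_r).
\]
Dividing by $n$ yields
\[
R(D)\le \sum_r w_r R_r(D_r)+\delta .
\]
Taking the infimum over feasible allocations and letting $\delta\to0$ gives the upper bound.

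\emph{Converse ($\ge$).} Let $P_{\hat X|X}$ be any kernel with $\mathbb E[d(X,\hat X)]\le D$. Define $D_r:=\mathbb E[d_r(X_r,\hat X_r)]$ using the marginal pairs. Then $\sum_r w_rD_r\le D$, so $\{D_r\}$ is feasible. The central step is superadditivity of mutual information for independent sources:
\[
I(X;\hat X)\ge \sum_r I(X_r;\hat X_r).
\]
To prove it, order the regions as $r_1,\dots,r_K$. The chain rule gives
\[
I(X;\hat X)=\sum_k I(X_{r_k};\hat X\mid X_{r_1},\dots,X_{r_{k-1}}).
\]
Because $X_{r_k}$ is independent of $(X_{r_1},\dots,X_{r_{k-1}})$, each term equals $I(X_{r_k};\hat X,X_{r_1},\dots,X_{r_{k-1}})$. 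This is at least $I(X_{r_k};\hat X_{r_k})$, since $\hat X_{r_k}$ is a function of $\hat X$.

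Each $I(X_r;\hat X_r)$ is at least $n_rR_r(D_r)$ by the definition \eqref{eq:regional_RD_def}. Hence
\[
\frac1n I(X;\hat X)\ge \sum_r w_rR_r(D_r)\ge \inf_{\{D_r\}}\sum_r w_r R_r(D_r).
\]
Taking the infimum over kernels gives the lower bound.

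\emph{Factorization.} Each Gaussian $R_r$ is attained, by reverse water-filling on the eigenvalues of the covariance of $X_r$. Each $R_r$ is also convex, nonincreasing and continuous on $[0,\infty)$. The allocation infimum is therefore attained: the feasible set can be restricted to the compact box $0\le D_r\le \sigma_r^2$ with $\sigma_r^2=\frac{1}{n_r}\mathrm{tr}\,\mathrm{Cov}(X_r)$, since $R_r$ vanishes for larger $D_r$. At the optimal $\{D_r^\star\}$, the product of the regionwise optimal kernels achieves equality in the achievability argument with $\delta=0$. So $P^\star_{\hat X|X}=\prod_r P^\star_{\hat X_r|X_r}$ is an achieving kernel.

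\emph{Main obstacle.} The main obstacle is the superadditivity step. It requires independence of the source blocks while allowing arbitrary coupling in $\hat X$, which the chain-rule argument handles. A secondary point is attainment of the regional and allocation infima, needed to state the factorization with a genuine optimizer rather than an approximate one. This rests on continuity of $R_r$ together with the explicit Gaussian optimal kernels.
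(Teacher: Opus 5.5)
Your proposal is correct and takes the same route as the paper. The upper bound uses a product kernel together with independence and the additive decomposition \eqref{eq:heterogeneous_mse2}. The lower bound sets $D_r$ to the induced regional distortions and invokes $I(X;\hat X)\ge\sum_r I(X_r;\hat X_r)$.

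You go further than the paper in three places, all of which it leaves implicit. You derive that superadditivity explicitly via the chain rule, and you handle possible non-attainment with $\delta$. You also justify the factorization claim through attainment of the Gaussian $R_r$ and of the allocation infimum; there, lower semicontinuity with $R_r(0)=+\infty$ is what the compactness step actually uses.
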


\begin{proof}
\emph{Upper bound.}
Fix a feasible allocation $\{D_r\}$ and kernels $\{P_{\hat X_r|X_r}\}$.
With the product kernel $P_{\hat X|X}=\prod_{r}P_{\hat X_r|X_r}$, independence gives
$I(X;\hat X)=\sum_r I(X_r;\hat X_r)$, and \eqref{eq:heterogeneous_mse2} yields
$\mathbb E[d(X,\hat X)]\le \sum_r w_r D_r\le D$.
Hence $R(D)\le \sum_r w_r R_r(D_r)$, and taking the infimum over feasible $\{D_r\}$ gives
$R(D)\le$ RHS of~\eqref{eq:regionwise_allocation_principle}.

\emph{Lower bound.}
Fix any feasible $P_{\hat X|X}$ and set $D_r\coloneqq \mathbb E[d_r(X_r,\hat X_r)]$ so that
$\sum_r w_r D_r\le D$ by \eqref{eq:heterogeneous_mse2}. By the chain rule and conditioning,
\[
I(X;\hat X)\ge \sum_{r\in\mathcal R} I(X_r;\hat X_r),
\]
and therefore
\[
\frac{1}{|\mathcal S|}I(X;\hat X)
\ge
\sum_{r\in\mathcal R} w_r \cdot \frac{1}{|\mathcal S_r|}I(X_r;\hat X_r)
\ge
\sum_{r\in\mathcal R} w_r R_r(D_r).
\]
Taking the infimum over feasible $P_{\hat X|X}$ gives $R(D)\ge$ RHS of
\eqref{eq:regionwise_allocation_principle}.
\end{proof}

\subsection{Optimal budgeting and a common multiplier}

Lemma~\ref{lem:regionwise_reduction} yields a convex allocation problem.  Assume each $R_r(D_r)$
is finite, convex, and differentiable on the relevant interval.  Consider
\eqref{eq:regionwise_allocation_principle} with active constraint $\sum_r w_r D_r^\star=D$.
The KKT conditions give, for every region operating in the interior,
\begin{equation}
R_r'(D_r^\star) = -\nu^\star,
\label{eq:equal_slope_condition}
\end{equation}
so the optimal allocation equalizes the marginal rate decrease per unit distortion.

\subsection{Reverse water-filling and a global water level}

Fix $r\in\mathcal R$ and let $\Sigma_r=U_r\Lambda_rU_r^\top$ with
$\Lambda_r=\operatorname{diag}(\lambda_{r,1},\ldots,\lambda_{r,n_r})$ and $n_r=|\mathcal S_r|$.
For $D_r\in(0,\frac{1}{n_r}\sum_{i=1}^{n_r}\lambda_{r,i})$, the Gaussian reverse water-filling
solution gives
\begin{equation}
R_r(D_r)
=
\frac{1}{2n_r}
\sum_{i=1}^{n_r}\Bigl[\log\frac{\lambda_{r,i}}{\theta_r}\Bigr]^+,
\label{eq:Rr_waterfilling}
\end{equation}
where $\theta_r$ satisfies
\begin{equation}
D_r
=
\frac{1}{n_r}
\sum_{i=1}^{n_r} \min\{\lambda_{r,i},\theta_r\},
\label{eq:Dr_waterlevel}
\end{equation}
and
\begin{equation}
R_r'(D_r) = -\frac{1}{2\theta_r}.
\label{eq:Rr_derivative_theta}
\end{equation}
Combining \eqref{eq:equal_slope_condition} and \eqref{eq:Rr_derivative_theta} shows that, whenever
regions operate in the interior regime, the optimal solution enforces a common water level
\begin{equation}
\theta_r \equiv \theta^\star \quad \text{for all } r\in\mathcal R.
\label{eq:common_waterlevel}
\end{equation}
Equivalently, define for any $\theta>0$ the induced region distortions and rates
\begin{equation}
D_r(\theta)\coloneqq \frac{1}{n_r}\sum_{i=1}^{n_r}\min\{\lambda_{r,i},\theta\},
\qquad
R_r(\theta)\coloneqq \frac{1}{2n_r}\sum_{i=1}^{n_r}\Bigl[\log\frac{\lambda_{r,i}}{\theta}\Bigr]^+.
\label{eq:region_parametric}
\end{equation}
Then $\theta^\star$ is the unique solution to
\begin{equation}
\sum_{r\in\mathcal R} w_r D_r(\theta^\star)=D,
\label{eq:theta_star_equation}
\end{equation}
and the global value satisfies
\begin{equation}
R(D)=\sum_{r\in\mathcal R} w_r R_r(\theta^\star).
\label{eq:global_R_theta_star}
\end{equation}
\section{Closed-Form Dispersion in Region Spectra}
\label{sec:dispersion_piecewise_spectral}
We express the dispersion appearing in the second–order expansion in terms of the region eigenvalues $\{\lambda_{r,i}\}$ and the global reverse water level $\theta^\star$ determined by~\eqref{eq:theta_star_equation}. Throughout, we consider distortion levels away from spectral kinks so that $\theta^\star\neq \lambda_{r,i}$ for all $(r,i)$.
\subsection{Spectral decomposition of the tilted information density}
For each region $r$, let $Y_r = U_r^\top (X_r-\mathbb E[X_r])$ denote the
decorrelated coordinates associated with $\Sigma_r$.
Under the optimal quadratic Gaussian reproduction kernel operating at
$D_r(\theta^\star)$, the distortion–tilted information density admits the
eigenmode decomposition
\begin{equation}
\jmath_X(X,D)
=
\sum_{r\in\mathcal R}\sum_{i=1}^{n_r}
\jmath_{r,i}(Y_{r,i};\theta^\star),
\label{eq:jmath_mode_sum}
\end{equation}
where the components $Y_{r,i}\sim\mathcal N(0,\lambda_{r,i})$ are mutually
independent.

Define the active eigenmode set
\begin{equation}
\mathcal A \;\coloneqq\; \{(r,i): \lambda_{r,i}>\theta^\star\}.
\label{eq:active_modes}
\end{equation}
Inactive modes contribute deterministic constants and therefore do not
affect dispersion.

For $(r,i)\in\mathcal A$, the tilted information density takes the form
\begin{equation}
\jmath_{r,i}(Y_{r,i};\theta^\star)
=
\frac{1}{2}\log\frac{\lambda_{r,i}}{\theta^\star}
+
\frac{1}{2}\!\left(\frac{Y_{r,i}^2}{\lambda_{r,i}}-1\right),
\label{eq:jmath_active_mode}
\end{equation}
with mean $\frac12\log(\lambda_{r,i}/\theta^\star)$.

\subsection{Closed-form dispersion}

The dispersion then given by 
\begin{equation}
V_{\text{pw}}(D)\;\coloneqq\;\Var\!\left[\jmath_X(X,D)\right].
\label{eq:dispersion_def}
\end{equation}
Using~\eqref{eq:jmath_mode_sum} and independence across eigenmodes,
\[
V_{\text{pw}}(D)
=
\sum_{(r,i)\in\mathcal A}
\Var\!\left[\jmath_{r,i}(Y_{r,i};\theta^\star)\right].
\]
Since $Y_{r,i}^2/\lambda_{r,i}\sim\chi^2_1$,
\[
\Var\!\left(\frac{Y_{r,i}^2}{\lambda_{r,i}}-1\right)=2,
\]
and hence
\begin{equation}
\Var\!\left[\jmath_{r,i}(Y_{r,i};\theta^\star)\right]
=
\frac12,
\qquad (r,i)\in\mathcal A.
\label{eq:mode_variance_half}
\end{equation}
Therefore, the dispersion admits the closed-form spectral expression
\begin{equation}
V_{\text{pw}}(D)
=
\frac12\,|\mathcal A|
=
\frac12
\sum_{r\in\mathcal R}\sum_{i=1}^{n_r}
\mathbf 1\{\lambda_{r,i}>\theta^\star\}.
\label{eq:dispersion_closed_form}
\end{equation}
Away from spectral kinks, each active eigenmode contributes a constant $\tfrac12$ to the dispersion. Spatial heterogeneity influences $V(D)$ only through the number of eigenvalues exceeding the common water level $\theta^\star$.
\subsection{Regionwise decomposition}
Since the regionwise dispersion contributions
\begin{equation}
V_r(D)
\;\coloneqq\;
\Var\!\left[\jmath_{X_r}(X_r,D_r(\theta^\star))\right]
=
\frac12
\sum_{i=1}^{n_r}\mathbf 1\{\lambda_{r,i}>\theta^\star\},
\label{eq:Vr_def}
\end{equation}we have,
\begin{equation}
V_{\text{pw}}(D)=\sum_{r\in\mathcal R} V_r(D).
\end{equation}
\section{Incorporating Tiling Constraints in Statistical Modeling of Heterogeneous Scientific Random Fields}\label{sec:Incorporating Tiling Constraints}
Let $\mathcal{D} \coloneqq \{X_t\}_{t=1}^{T}$ denote $T$ realizations of a heterogeneous scientific random field on a finite lattice $\mathcal{S} \subset \mathbb{Z}^2$, with $X_t \in \mathbb{R}^{M \times N}$ and natural extension to $d=3$. The lattice dimensions $M$ and $N$ are finite but large, and direct estimation of the full covariance kernel over $\mathcal{S}$ (or even within a region $\mathcal{S}_r$) is computationally prohibitive in storage and numerical complexity.

Modern scientific compressors operate on local tiles due to memory and architectural constraints. To reflect this operational structure, we partition the lattice into tiles and restrict statistical estimation to these local domains. Tiling reduces the dimensionality of each covariance estimation problem and enables reliable inference of localized second-order structure from a finite number of realizations. Under this formulation, tiles serve a dual role: they represent architectural constraints inherent to practical compressors and provide statistically tractable windows for estimating the regionwise parameters $(m_r, \Gamma_r)$ of the piecewise homogeneous model.

Let $k \in \mathbb{N}$ be the tile size and lattice $\mathcal{S}$ be partitioned into non-overlapping $k \times k$ tiles
\begin{equation}
    \{\mathcal{B}\}_{\tau=1}^{N_\tau}, \quad N_\tau = \lfloor M/k \rfloor \lfloor N/k \rfloor.
\end{equation}
the extension to $d=3$ is immediate. For each tile $\mathcal{B}_{\tau}$, we compute local second-order descriptors and assign a label $z^{(\tau)}\in\{1,\ldots,K\}$. The region parameters $(m_r, \Gamma_r)$ are then estimated by pooling the tiles assigned to each label.

The tile size $k$ is an externally imposed system parameter that affects both the resolution of region identification and the degree of parallelism in compressor implementations. A detailed analysis of the estimation procedure, including model selection criteria for determining the number of regions $K$, is presented in [extended version]\footnote{Extended version of this work containing the full estimation methodology is currently under review. It is not publicly available at the time of submission.}. In this work, we assume the region partition and parameters have been reliably estimated via such a tiling procedure, which aligns with both statistical best practices for heterogeneous field modeling and the operational constraints of practical compressors. 
\section{Statistical Validation of Modeling Assumptions}\label{sec:Statistical Validation of Modeling Assumptions}
We next examine whether the observed realizations are consistent with the structural assumptions underlying the finite-blocklength RD bounds derived for the piecewise homogeneous random field model. This validation step ensures that the finite-blocklength RD bounds developed in this work are derived under explicit structural assumptions. If the realizations are inconsistent with those assumptions, the resulting bounds cannot serve as meaningful information-theoretic lower bounds for error-bounded lossy compression.

The assessment focuses on two questions that mirror the modeling gap identified. First, are the observed realizations statistically consistent with homogeneous random fields, as assumed in classical  RD theory. Second, when homogeneity fails, do the deviations exhibit structured spatial heterogeneity that is compatible with the developed regionwise approximation. To address these questions, we compute diagnostic quantities directly from the collection of independent field realizations defined on the common finite lattice. These diagnostics are designed to scale to large multidimensional grids. We evaluate Gaussianity through randomized linear probes across realizations, characterize second-order structure via empirical mean and autocovariance fields, and assess spatial regularity using radial summaries. These statistical diagnostics are complemented by likelihood-based model comparison across a collection of global homogeneous models and the developed piecewise homogeneous alternative.

The resulting evidence determines whether a single homogeneous model is adequate or whether the presented piecewise homogeneous representation provides a more accurate description of the spatial variability present in the realizations. This distinction directly informs the interpretation of the RD curves in Fig.~\ref{fig:four_in_row}(c) -- (e) and clarifies why classical homogeneous bounds are misaligned with the structural properties of large-scale scientific fields.
\subsection{Diagnostics for the Proposed Approximation}\label{sec:Diagnostics}
\paragraph{Gaussianity decision.}\label{para:gaussianity_decision}
Given $\mathcal{D}$, we evaluate GRF behavior using randomized linear probes that scale to large multidimensional grids.  For each field realization $X_t$, we construct $J$ sparse random probes supported on randomly selected grid points and compute the corresponding scalar probe responses $\{y_{t,j}\}_{j=1}^{J}$. 
Since Gaussianity is preserved under linear functionals, the GRF hypothesis implies that each probe response sequence $\{y_{t,j}\}_{t=1}^{T}$ should be Gaussian across realizations.

For each probe $j$, we test for Gaussianity using the Shapiro–Wilk test~\cite{shapiro1965analysis} on $\{y_{t,j}\}_{t=1}^{T}$.
We control multiplicity across probes using the Benjamini--Hochberg procedure~\cite{benjamini1995controlling}, yielding adjusted $p$-values $p_j^{\mathrm{BH}}$. 
The probe rejection rate $\hat{\pi}$ measures the fraction of probes that reject Gaussianity at significance level $\alpha$. Formally,
\begin{equation}
\label{eq:probe_rejection_rate}
\hat{\pi}
\coloneqq
\frac{1}{J}\sum_{j=1}^{J}\mathbf{1}\{p_j^{\mathrm{BH}}\le\alpha\}.
\end{equation}
We reject the GRF Gaussianity hypothesis when $\hat{\pi}>\delta$, where $\alpha$ denotes the probe-level significance threshold and $\delta$ is fixed to $0.05$, specifying the maximum tolerated fraction of probe-level rejections.

\paragraph{Empirical mean field.}
Given $T$ field realizations defined on a common $M\times N$ lattice, the empirical mean field is
\begin{equation}
\label{eq:empirical_mean}
\hat{\mu}(i,j)\coloneqq \frac{1}{T}\sum_{t=1}^{T}X_t(i,j),
\qquad (i,j)\in[M]\times[N].
\end{equation}

\paragraph{Empirical autocovariance.}
Let $\tilde{X}_t \coloneqq X_t-\hat{\mu}$. We estimate the empirical autocovariance via the Wiener--Khinchin theorem as
\begin{equation}
\label{eq:empirical_autocov}
\hat{C}
\coloneqq
\mathcal{F}_2^{-1}\!\left\{
\frac{1}{T}\sum_{t=1}^{T}
\bigl|\mathcal{F}_2\{\tilde{X}_t\}\bigr|^2
\right\},
\end{equation}
where $\mathcal{F}_2$ denotes the 2D discrete Fourier transform.

\paragraph{Radial autocovariance profile.}
To assess isotropy, we compute the radial average of the empirical autocovariance $\hat{C}$,
\begin{equation}
\label{eq:radial_autocov}
\bar{C}(l)\coloneqq
\frac{1}{|\Omega_l|}
\sum_{h\in\Omega_l}\hat{C}(h),
\qquad
\Omega_l=\{h:\|h\|_2\in[l,l+1)\},
\end{equation}
which aggregates autocovariance values over concentric annuli of increasing radius. Here, $h$ denotes a 2D spatial lag vector, $\hat{C}(h)$ is the empirical autocovariance at lag $h$, and $l$ indexes discrete radial lag bins.
\begin{algorithm}[t]
\caption{\footnotesize \textbf{Testing global random field assumptions}}
\label{alg:grf_diagnostics}
\small
\begin{algorithmic}[1]
 \renewcommand{\algorithmicrequire}{\textbf{Input:}}
 \renewcommand{\algorithmicensure}{\textbf{Output:}}
 \REQUIRE Field realizations $\{X_t\}_{t=1}^T$; significance level $\alpha$; tolerance $\delta$
 \ENSURE Diagnostic Report
 \STATE Compute the probe rejection rate $\hat{\pi}$ using~\eqref{eq:probe_rejection_rate} (Gaussianity decision).
 \STATE Estimate the empirical mean field $\hat{\mu}$ using~\eqref{eq:empirical_mean}.
 \STATE Estimate the empirical autocovariance $\hat{C}$ via~\eqref{eq:empirical_autocov}.
 \STATE Assess mean stationarity using $\hat{\mu}$ dispersion and $\hat{C}$ centering.
 \STATE Compute the radial autocovariance profile $\bar{C}(l)$ using~\eqref{eq:radial_autocov}.
 \STATE Assess isotropy via structure of $\hat{C}(h)$ and smoothness of $\bar{C}(l)$.
\end{algorithmic}
\end{algorithm}

\begin{algorithm}[t]
\caption{\footnotesize \textbf{AIC/BIC model comparison}}
\label{alg:aic_bic_model_selection}
\small
\begin{algorithmic}[1]
 \renewcommand{\algorithmicrequire}{\textbf{Input:}}
 \renewcommand{\algorithmicensure}{\textbf{Output:}}
 \REQUIRE Field realizations $\{X_t\}_{t=1}^T$;
 candidate models $\mathcal{M}$
 \ENSURE AIC/BIC rankings over $\mathcal{M}$ under a fixed $n_{\mathrm{eff}}$

 \STATE Fix the effective sample size for all models: $n_{\mathrm{eff}}\gets T$.
 
 \FOR{each model $m \in \mathcal{M}$}
   \STATE Fit parameters $\hat{\theta}_m$ and determine parameter count $q_m$.
   \STATE Compute log-likelihoods $\{\ell_m(X_t)\}_{t=1}^{T}$ using~\eqref{eq:image_ll}.
   \STATE Compute dataset log-likelihood $\log L_m$ using~\eqref{eq:log_likelihood}.
   \STATE Compute $\mathrm{AIC}_m,\mathrm{BIC}_m$ using~\eqref{eq:aic_bic_defs}.
 \ENDFOR

 \STATE Select model that minimize the AIC and BIC over $\mathcal{M}$
\end{algorithmic}
\end{algorithm}

\subsection{Assessment of Global Homogeneous Gaussian Assumptions}\label{sec:Global Gaussianity Diagnostics}
We begin by testing whether the observed realizations are statistically consistent with a single homogeneous random field on the finite lattice under Gaussian assumptions. This procedure is formalized in Algorithm~\ref{alg:grf_diagnostics}

Gaussianity is evaluated using randomized linear probes, whose distributions are fully determined under the Gaussian assumption. The primary outcome is the probe rejection rate $\hat{\pi}$ defined in~\eqref{eq:probe_rejection_rate}. A small value of $\hat{\pi}$ indicates that probe responses across realizations do not contradict Gaussianity. When $\hat{\pi}$ exceeds the prescribed tolerance $\delta$, Gaussianity fails in aggregate, indicating that a single homogeneous Gaussian description is not adequate.

We then examine the second-order structure of the realizations through the empirical mean field and empirical autocovariance, defined in~\eqref{eq:empirical_mean}–\eqref{eq:radial_autocov}. These quantities summarize the spatial regularity implied by a homogeneous stationary model. Second-order stationarity requires that the mean be approximately constant across the lattice and that the covariance depend only on spatial lag. Systematic variation in $\hat{\mu}$ or location-dependent covariance structure indicates heterogeneity.

Isotropy is assessed using the radial autocovariance profile $\bar{C}(l)$ in~\eqref{eq:radial_autocov}. Radial averaging aggregates autocovariance values by spatial separation, enabling evaluation of directional symmetry independent of scale. Dependence that varies only with distance is consistent with isotropy, whereas directional asymmetries indicate departures from homogeneous second-order structure.
\subsection{Likelihood-Based Model Selection}\label{sec:modelcomparison}
Beyond diagnostic testing, we evaluate the adequacy of competing stochastic models using likelihood-based comparison, as summarized in Algorithm~\ref{alg:aic_bic_model_selection}. For each field, we fit candidate models $\mathcal{M}$ comprising the developed piecewise hommogeneous approximation, global 2D Gaussian, Laplacian, and Poisson random field models, and 1D Gaussian, Laplacian, and Poisson random process baselines. Each model is scored using the aggregate dataset log-likelihood~\eqref{eq:log_likelihood}, followed by AIC and BIC~\eqref{eq:aic_bic_defs}, enabling principled comparison across disparate modeling assumptions. 

Formally, for each model $m \in \mathcal{M}$, let $\theta_m$ denote its parameter vector and $q_m$ the corresponding number of free parameters. Given independent field realizations $\{X_t\}_{t=1}^T$, the per-realization log-likelihood under model $m$ is defined as
\begin{equation}
    \ell_m(X_t) \coloneq \log f_m(X_t; \hat{\theta}_m),
    \label{eq:image_ll}
\end{equation}
where $f_m(\cdot; \hat{\theta}_m)$ denotes the fitted probability density (or mass) function and $\hat{\theta}_m$ is the maximum-likelihood estimate. The aggregate log-likelihood over all realizations is
\begin{equation}
    \log L_m = \sum_{t=1}^T \ell_m(X_t).
    \label{eq:log_likelihood}
\end{equation}
To ensure comparability across models, we fix a common effective sample size $n_eff = T$. The Akaike Information Criterion (AIC) and Bayesian Information Criterion (BIC) are then defined as
\begin{equation}
    \text{AIC}_m = -2\log L_m + 2q_m, \qquad \text{BIC}_m = -2\log L_m + q_m\log n_{eff}.
    \label{eq:aic_bic_defs}
\end{equation}
These criteria balance goodness of fit, as quantified by the log-likelihood, against model complexity, as quantified by the parameter count. Models attaining smaller AIC or BIC values are preferred.
\subsection{Empirical Evidence for Regionwise Heterogeneity}
We apply the diagnostic and likelihood-based procedures in Algorithms~\ref{alg:grf_diagnostics} and~\ref{alg:aic_bic_model_selection} to assess the validity of the homogeneous assumptions underlying classical RD theory and the piecewise homogeneous framework and the associated RD theory and bounds developed in this work. In a broader empirical study comprising 72 scientific fields from the SDRBench suite~\cite{zhao_sdrbench_2020}, only 5\% satisfy the structural conditions required by a homogeneous random field model. In particular, none of the NYX or Scale-LETKF fields meet these criteria.

Likelihood-based evaluation indicates systematic preference for regionwise structure: in all 72 cases, the piecewise homogeneous model attains AIC and BIC values no larger than those of homogeneous Gaussian alternatives under a common effective sample size. The small subset of fields consistent with a homogeneous random field corresponds to the special case in which the number of regions equals one; in this regime, the piecewise homogeneous formulation reduces to the global homogeneous model and therefore is covered by the RD theory developed in this work.

These findings indicate that globally homogeneous abstractions are generally inconsistent with the statistical structure of large-scale scientific fields, whereas the developed tiling-constrained piecewise formulation provides a statistically adequate representation. In this paper, we present detailed results for the NYX~\cite{nyx}. A comprehensive analysis of additional fields is provided in [extended version].

\section{Theoretical and Practical Implications}\label{sec:implications}
We now examine the theoretical behavior of the derived bounds and their implications for scientific compression, as illustrated in Figs.~\ref{fig:finitebl} and~\ref{fig:four_in_row}.

This section demonstrates how the finite-blocklength RD framework developed in this work yields actionable insights for the design and evaluation of error-bounded lossy compressors in high-performance computing (HPC) environments. Fig.~\ref{fig:finitebl} quantifies the scaling behavior of the achievability, converse, and second-order terms, isolating finite-blocklength effects under controlled heterogeneous structure. Fig.~\ref{fig:four_in_row} then compares the resulting RD limits with empirical compressor performance on heterogeneous scientific data. By aligning theoretical limits with operational behavior, we obtain principled guidance for assessing compressor efficiency, interpreting observed RD tradeoffs, and informing system-level design decisions.
\begin{figure}
    \centering
    \includegraphics[width=0.5\linewidth]{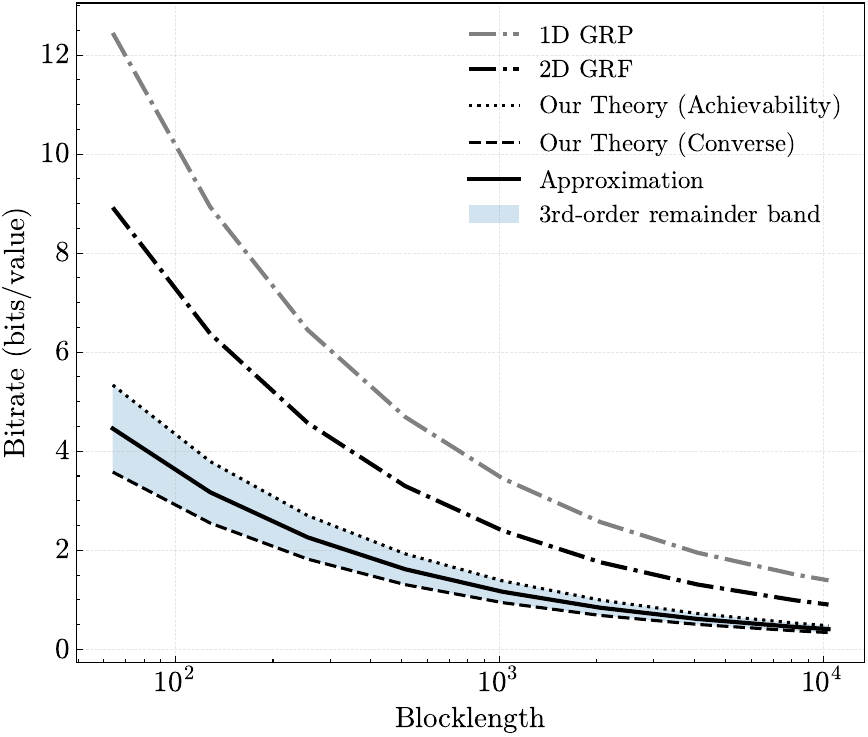}
    \caption{Bounds to $R(n,d, \epsilon)$ for piecewise homogeneous gaussian source with MSE distortion}
    \label{fig:finitebl}
\end{figure}
\subsection{Quantitative Finite-Blocklength Analysis}
Fig.~\ref{fig:finitebl} evaluates the finite-blocklength behavior of the developed bounds on a synthetically generated heterogeneous random field constructed according to the piecewise homogeneous random field model. The plot shows bitrate in bits per value versus blocklength on a logarithmic scale. We compare the piecewise finite-blocklength bounds against two homogeneous baselines. The first baseline is a 1D Gaussian random process (GRP), obtained by linearizing the two-dimensional field into a one-dimensional sequence and modeling it as a stationary Gaussian process with a single global covariance structure. The second baseline is a global 2D Gaussian random field (GRF), which models the entire spatial domain as a single stationary Gaussian field governed by one covariance operator defined over the full lattice. In both cases, the corresponding finite-blocklength RD bounds are computed under homogeneous Gaussian assumptions using the framework of~\cite{kostina2012finiteblocklength}. These baselines characterize the optimal performance achievable within globally stationary source models. The results include: (i) a 1D GRP baseline~\cite{kostina2012finiteblocklength}; (ii) a global 2D GRF baseline~\cite{kostina2012finiteblocklength}; and the developed piecewise (iii) achievability bound; (iv) approximation; (v) converse bound; and (vi) third-order remainder band.

In the small-blocklength regime, all models require higher rate. The gap between achievability and converse bounds is largest in this region, and the third-order remainder band is wide, reflecting the non-negligible finite-blocklength penalty predicted by the non-asymptotic expansion. The 1D GRP baseline yields the highest rates, while the global 2D GRF baseline reduces the rate but remains separated from the piecewise bounds.

As blocklength increases, the achievability and converse bounds converge, and the remainder band contracts. The Section~\ref{sec:second_order_asymptotics} approximation lies between achievability and converse and becomes accurate at moderate blocklengths. This behavior is consistent with second-order RD expansions, where the dispersion term scales on the order of $1/\sqrt{n}$ and higher-order terms diminish with increasing $n$. 

In the large-blocklength regime, the achievability, converse, and approximation curves nearly coincide, indicating that finite-blocklength penalties become small relative to the leading RD term. However, the global homogeneous GRF curve remains separated from the piecewise bounds even asymptotically. This residual gap reflects source model mismatch due to heterogeneity rather than finite-blocklength effects; increasing blocklength does not eliminate this structural discrepancy.
\begin{figure}[ht]
  \centering
  \subfloat[]{
    \includegraphics[width=0.4\linewidth]{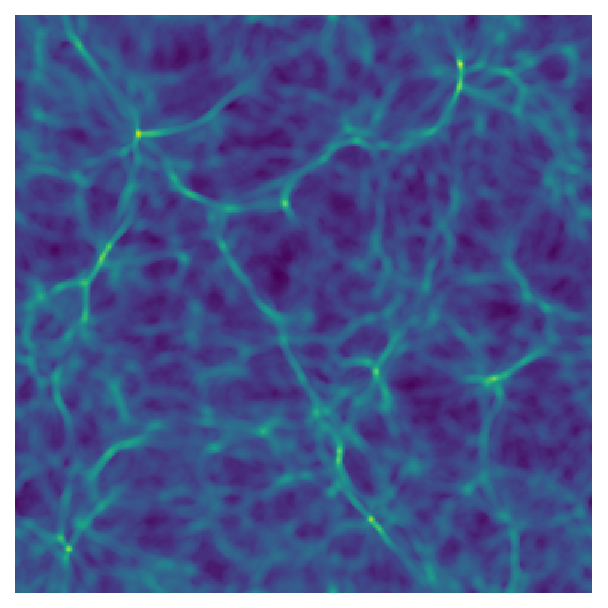}
  }\hfill
  \subfloat[]{
    \includegraphics[width=0.4\linewidth]{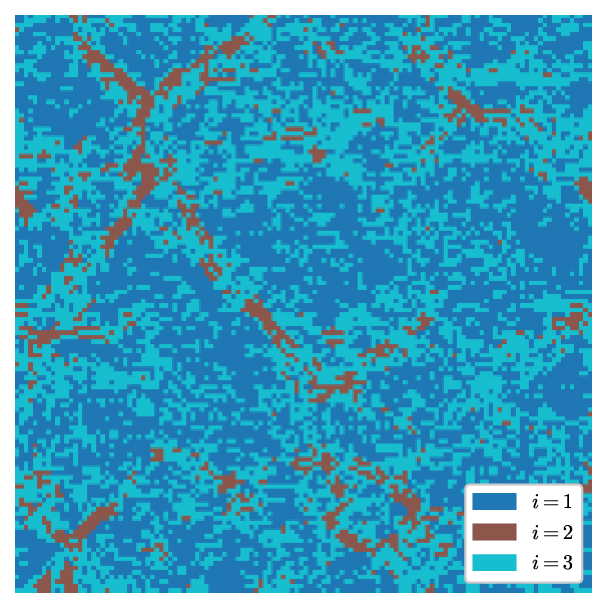}
  }\hfill
  \subfloat[]{
    \includegraphics[width=0.3\linewidth]{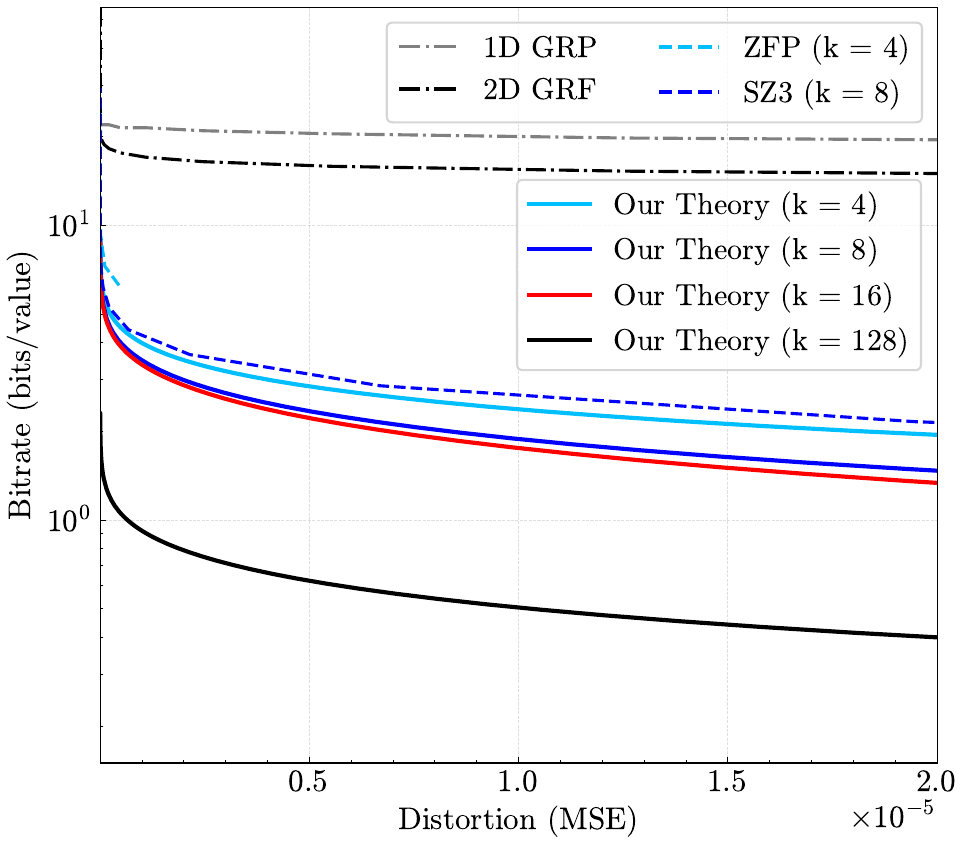}
  }
  \hfill
  \subfloat[]{
    \includegraphics[width=0.3\linewidth]{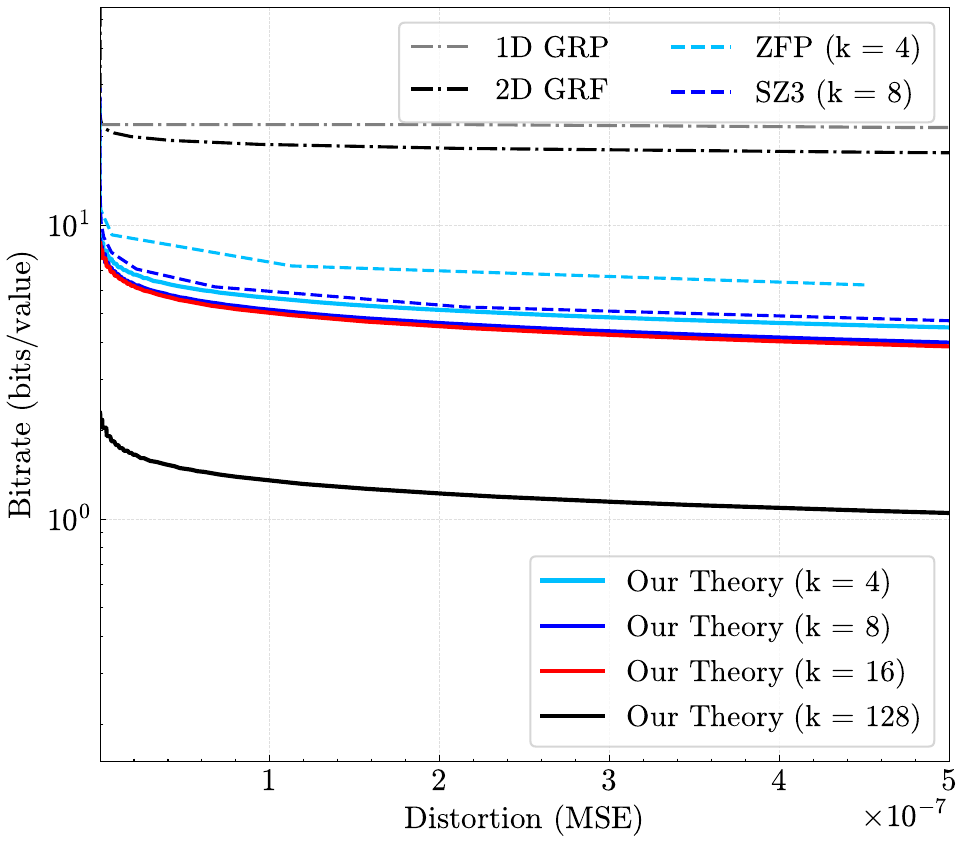}
  }
  \hfill
  \subfloat[]{
    \includegraphics[width=0.3\linewidth]{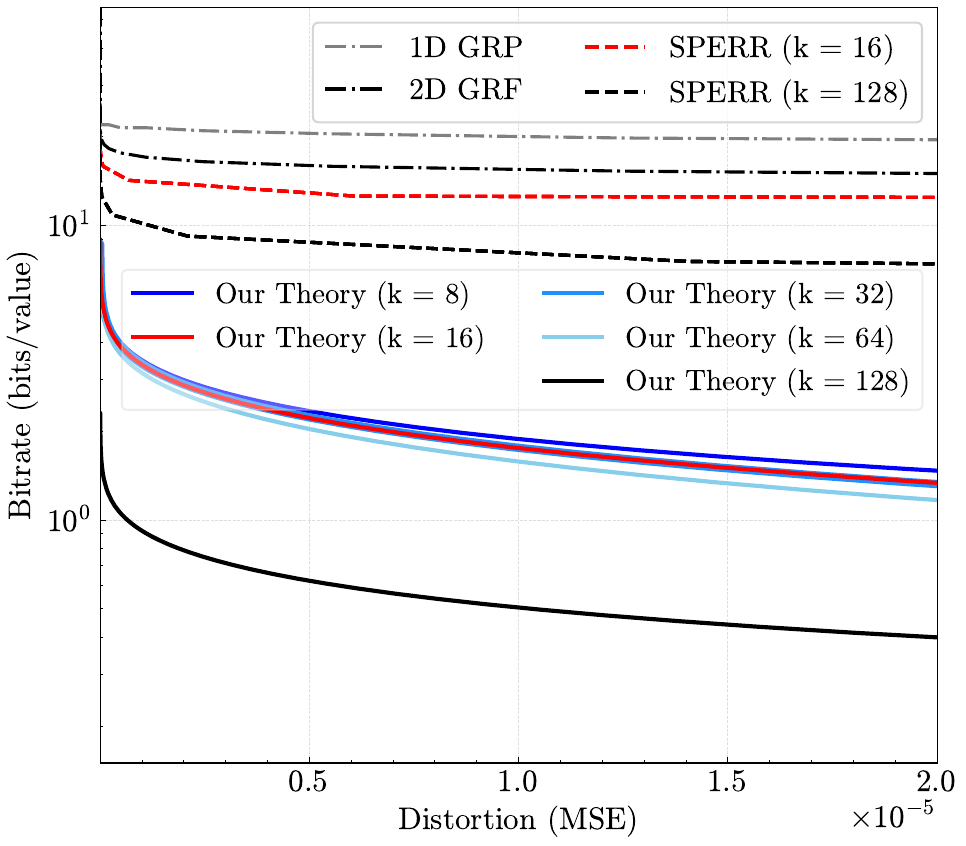}
  }\hfill
  \caption{Piecewise modeling and tile-aware finite-blocklength RD analysis of heterogeneous scientific fields.
(a) Spatially heterogeneous NYX field.
(b) Region partition defining a piecewise homogeneous Gaussian model.
(c) Finite-blocklength RD bounds under classical homogeneous and piecewise models together with empirical compressor curves, demonstrating the impact of heterogeneity and tile granularity. (d) Zoomed-in version of (c) in the low-distortion regime, where ZFP operates, to enable a more detailed assessment of its empirical RD performance.
(e) Tile-size scaling of piecewise RD bounds and SPERR performance, revealing structural correlation scales and the statistical–architectural trade-off governing compressibility. \textbf{Note}: The absence of portions of the empirical compressor curves (dashed lines) at certain distortion levels reflects feasibility limits intrinsic to the corresponding compressor configurations.}
  \label{fig:four_in_row}
\end{figure}
Fig.~\ref{fig:finitebl} isolates two distinct phenomena: (i) finite-blocklength penalties, which decay with blocklength and are accurately captured by the second-order expansion and remainder term; and (ii) modeling error arising from homogeneous source assumptions, which persists asymptotically for heterogeneous fields. The results confirm that, once heterogeneity is incorporated through the piecewise model, the remaining achievability–converse gap behaves in accordance with finite-blocklength RD theory.

\subsection{Implications for Scientific Compression}\label{subsec:implications}
Existing RD theory does not characterize the compression limits of heterogeneous scientific fields under tile-based compression architectures. This work derives tile-aware finite-blocklength bounds that incorporate spatial heterogeneity and tile granularity. The RD bounds developed in this work characterize fundamental limits of compressibility and provide principled guidance for the design and evaluation of scientific compressors. Fig.~\ref{fig:four_in_row} illustrates the empirical validation for our theoretical framework. Panel (a) shows a representative 2D  scientific field from the NYX cosmology simulation (baryon density field), available in the SDRBench suite~\cite{zhao_sdrbench_2020}. This field exhibits pronounced spatial heterogeneity i.e., filamentary high-density structures are embedded in low-density regions. The statistical properties are not translation-invariant across the domain. In particular, local variance and spatial correlation structure vary significantly between dense filaments and background regions. Such heterogeneity fundamentally violates the global homogeneity assumptions required by classical RD theory.

Panel (b) shows a three-region partition of the same field (legend indicates $i=1,2,3$). The optimal number of regions, region-specific parameters $(m_r, \Gamma_r)$, and validation of our modeling assumptions are perfomed using the procedure described in Sections~\ref{sec:Incorporating Tiling Constraints} and~\ref{sec:Statistical Validation of Modeling Assumptions}. 

Panels (c) -- (e) demonstrate how the developed framework yields fundamental limits of compressibility for heterogeneous scientific fields while explicitly incorporating the tile-based operational structure used by modern compressors. Panel (c) compares the classical homogeneous baselines, the tile-aware bounds at the tile sizes relevant to fixed-granularity compressors, and the corresponding empirical RD curves. Specifically, Panel (c) includes the 1D GRP and global 2D GRF baselines~\cite{kostina2012finiteblocklength}, the piecewise 2D GRF bounds evaluated at $k \in \{4, 8, 16, 128\}$ , and empirical RD curves for two leading compressors ZFP~\cite{zfp} and SZ3~\cite{liang_sz3_2023}, which operate at $4 \times 4$ and $8 \times 8$, respectively. Panel (d) zooms into the low-distortion regime, where ZFP operates, to enable a more detailed assessment of its empirical RD performance. Panel (e) isolates the role of tile size and includes results for SPERR~\cite{sperr}, which supports configurable tile sizes. It includes the same homogeneous baselines, the piecewise bounds at $k \in \{8, 16, 32, 64, 128\}$, and empirical RD curves for SPERR at $k \in \{16, 128\}$.

When applied to heterogeneous scientific data, the homogeneous baselines do not yield valid lower bounds. In both panels, the empirical compressor curves lie below the 1D GRP and global 2D GRF curves across the distortion range. The 1D GRP model reduces the two-dimensional field to a one-dimensional process and therefore neglects spatial dependence across rows and columns. The global 2D GRF model assumes homogeneity over the entire domain and represents the field with a single covariance operator, effectively averaging statistically distinct regions. In each case, the corresponding RD curves are computed using the finite-blocklength framework of~\cite{kostina2012finiteblocklength} under homogeneous Gaussian assumptions. These results are mathematically valid within their respective source classes and correctly characterize the RD behavior of global models. However, heterogeneous scientific fields exhibit spatially varying second-order statistics, and practical compressors operate on fixed-size tiles. Since the baselines ignore both heterogeneity and tile granularity, the resulting theoretical rates exceed those achieved by compressors. This discrepancy reflects model and architectural mismatch, not any limitation of classical or finite-blocklength RD theory itself.

The piecewise bounds resolve this mismatch by modeling heterogeneity at a granularity consistent with tile-based processing. For every tile size shown in Panels (c) -- (e), the corresponding theoretical curve lies below the matched empirical compressor curve, and no compressor outperforms its tile-matched bound. This behavior is consistent with interpreting the developed curves as RD lower bounds under the regionwise Gaussian model when the statistical modeling granularity is aligned with the operational tile size. In particular, Panel (c) provides tile-specific RD bounds directly relevant to fixed-granularity compressors  including ZFP (at $k=4$) and SZ3 (at $k=8$).

Panel (e) further clarifies how tile size parameter governs compressibility. Increasing $k$ increases the spatial support over which second-order statistics are estimated and modeled jointly, allowing additional spatial dependence to be incorporated into the RD analysis. As $k$ increases from $8$ to $16$, and subsequently to $32$ and $64$, the theoretical RD curves decrease gradually, reflecting incremental reductions in minimal achievable rate as progressively longer-range correlation is captured within each tile. The case $k=16$ is particularly informative as a substantial portion of the locally dominant correlation structure is already represented, resulting in a noticeable rate reduction relative to smaller tiles. Increasing the tile size further to $k=32$ and $k=64$ yields only modest additional reductions in rate, indicating that most of the statistically significant short- and medium-range dependence has already been incorporated at $k=16$. In this regime, enlarging the tile primarily produces diminishing statistical returns while reducing the number of independently processed tiles. Consequently, 
$k=16$ represents a balance point at which meaningful statistical gains are achieved without materially sacrificing parallelism and load-balancing granularity in distributed HPC settings, where tiles are processed independently across compute nodes. 
\emph{At $k=16$ the RD bound reflects the practically attainable information-theoretic limit under realistic parallel execution constraints.}

The transition from $k=64$ to $k=128$ produces a noticeably larger decrease in the theoretical rate. This indicates that the tile size crosses a structural correlation scale of the field, beyond which previously unmodeled long-range dependence becomes statistically significant. Among the considered values, $k=128$ yields the minimal predicted rate and therefore the strongest information-theoretic limit under the assumed piecewise homogeneous Gaussian model. \emph{At $k=128$, the RD bound reflects the maximal compressibility predicted by the statistical structure of the field under the adopted assumptions.}

RD theory optimizes the rate under distortion constraints without regard to architectural cost. Practical compressors must operate under memory, scheduling, and parallel execution constraints. Increasing $k$ reduces the number of tiles. While $k=128$ is statistically optimal in the RD sense, it significantly reduces parallelism and load-balancing granularity in HPC environments. The compressor designer therefore faces a trade-off: larger tiles reduce bitrate, but smaller tiles increase scalability, throughput, and parallel execution efficiency.

Panel (e) further shows that SPERR, although capable of operating at multiple tile sizes, remains substantially above the corresponding theoretical bounds at $k=\{16, 128\}$. While the gap between theory and practice decreases as $k$ increases, it remains substantial even when the tile size is favorable. This indicates that existing compressors do not achieve the rate implied by the covariance structure of the heterogeneous source model. The developed framework makes this gap explicit by quantifying how rate depends jointly on source heterogeneity and tile granularity. It therefore provides fundamental lower bounds on achievable compression rates and offers principled guidance for the design and evaluation of tile-based scientific compressors. The analysis shows that tile size alone does not determine performance; compressor design must also account for the statistical structure of scientific fields. By relating compressibility directly to the statistical properties of the data and the imposed tile architecture, the framework offers concrete guidance for the design and evaluation of future tile-based compression methods.

\section{Conclusion}\label{section:Conclusion}
The central objective of this work has been to determine the fundamental information-theoretic limits of compressibility for scientific data defined on finite lattices. Large-scale scientific fields are spatially correlated, finite, and statistically heterogeneous. Classical RD theory, developed under homogeneous source assumptions, does not directly characterize the compression limits of such sources. This work establishes a framework in which these limits can be rigorously analyzed under realistic structural and architectural constraints.

We formulated the finite-blocklength lossy source coding problem for heterogeneous random fields modeled as piecewise homogeneous Gaussian sources with explicit region structure. Tiling constraints, which govern the operational structure of modern scientific compressors, were incorporated directly into the source model. Within this setting, we derived non-asymptotic achievability and converse bounds under excess-distortion probability constraints and established a second-order expansion for $\log(M^*(\mathcal{S}, D, \varepsilon))$ in the proportional-growth regime. The resulting normal approximation decomposes additively across regions and isolates finite-blocklength penalties from structural effects induced by heterogeneity.

The closed-form spectral characterization obtained through regionwise reverse water-filling provides a precise interpretation of compressibility in terms of active eigenmodes exceeding a global water level. This reveals how spatial heterogeneity and region geometry influence both first- and second-order performance limits.

Empirical evaluation confirms two distinct effects. When heterogeneity is incorporated through region structure, the achievability and converse bounds exhibit the scaling predicted by finite-blocklength theory, and the dispersion term accurately captures the residual gap. In contrast, homogeneous Gaussian abstractions incur a persistent rate penalty that does not vanish with increasing blocklength. This penalty reflects structural model mismatch rather than finite-blocklength effects. The theory therefore distinguishes intrinsic stochastic limits from modeling error.

The resulting framework provides a rigorous fundamental lower bound for error-bounded lossy compression of heterogeneous scientific fields. It quantifies the rate penalties associated with finite blocklength and architectural granularity, and it identifies regimes in which increasing tile size yields diminishing compressibility gains. These conclusions follow from structural properties of the source and are independent of specific algorithmic implementations.

\section{Limitations and Future Work}\label{section:Limitations and Future Work}
This work constitutes a first systematic step toward defining the fundamental compressibility limits of heterogeneous scientific random fields under explicit architectural constraints. The framework developed here isolates spatial heterogeneity, region structure, and finite-blocklength effects within a mathematically tractable Gaussian setting. While this abstraction captures a broad class of floating-point scientific workloads, it does not exhaust the structural complexity encountered in practice.

The present analysis is restricted to piecewise homogeneous Gaussian random fields with block-diagonal covariance structure aligned to region partitions. While Gaussianity enables spectral characterization and closed-form reverse water-filling, many scientific workloads are governed by non-Gaussian statistics. For example, light-source experiments exhibit heavy tails, count statistics, or multiplicative structure. For such sources, the distortion-tilted information density may not admit quadratic form, and dispersion may depend on higher-order cumulants rather than eigenvalue thresholds. Extending finite-blocklength RD theory to heterogeneous non-Gaussian random fields, particularly within exponential families or infinitely divisible processes, remains an open and technically demanding challenge.

Domain scientists often care about preservation of quantities of interest such as persistence diagrams, spectral energy densities, or selected statistical moments that capture structural or distributional behavior. These observables may be only weakly correlated with pointwise distortion, so two reconstructions with identical MSE can differ substantially in scientific interpretability. A next-generation theory must therefore move beyond field-wise distortion toward functional distortion, in which fidelity is defined through the stability of scientifically meaningful observations. This reframes the RD problem as one of preserving structure rather than minimizing mean-squared error, leading to what may be viewed as a RD perception framework for scientific data. In such settings, admissible reconstructions are shaped by the geometry of the observable itself, and the associated information density and dispersion may differ fundamentally from the quadratic Gaussian case. Establishing finite-blocklength limits under these perception-aware criteria remains an important open direction.

\section*{Acknowledgments}
The authors acknowledge Zizhe Jian for conducting the SPERR compression experiments and assisting with their implementation and evaluation.

The material was supported by the U.S. Department of Energy, Office of Science, Advanced Scientific Computing Research (ASCR), under contracts DE-SC0025416 and DE-AC02-06CH11357.

We gratefully acknowledge the computing resources provided on Bebop, a high-performance computing cluster operated by the Laboratory Computing Resource Center at Argonne National Laboratory.


\bibliographystyle{IEEEtran}
\bibliography{references}

\end{document}